\newtheorem{theorem}{Theorem}[section]
\newtheorem*{theorem*}{Theorem}
\newtheorem{corollary}[theorem]{Corollary}
\newtheorem{definition}[theorem]{Definition}
\newtheorem{lemma}[theorem]{Lemma}
\newtheorem*{lemma*}{Lemma}
\newtheorem{proposition}[theorem]{Proposition}
\newtheorem{remark}[theorem]{Remark}
\begin{document}

\title{For how long time evolution of chaotic or random systems can be predicted.}

\author[1]{Leonid Bunimovich}
\author[1, 2]{Kirill Kovalenko}
\affil[1]{School of Mathematics, Georgia Institute of Technology, Atlanta, USA}
\affil[2]{Scuola Superiore Meridionale, School for Advanced Studies, Naples, Italy}
\date{}

\maketitle

\section{Introduction}
Traditionally, the theories of random processes and of chaotic dynamical systems studied only asymptotic in time properties of observables in such systems. Likewise, only averages of observables over infinite time intervals were considered in these theories. Such problems seem to be the only ones tractable, and, moreover, the only ones that make sense to consider. Indeed, the questions like what are the differences between the evolution of the system in two different instances of time seem to be not only non-tractable, but in fact unreasonable. It turned out that there are meaningful questions about finite time evolution of random and chaotic dynamical systems, which not just make sense and are interesting, but, moreover, such questions can be exactly answered. 
A very basic question in this new area is for how long is it possible to make finite-time predictions. It turned out that the longest predictions can be made for the most random (chaotic) systems like a coin tossing, although it seems to be rather counterintuitive. Indeed, it is naturally to expect that for the most random systems just finite time predictions can be made for the shortest time intervals. However, the situation is totally opposite. In fact longer finite time predictions can be made for systems with the simplest dynamics, rather than for weakly chaotic systems, where dynamics itself is much more nonuniform and complicated.
A principal question in this area is how the possible length of finite time predictions depends upon the precision of observing dynamics (time evolution). In other words, what is going to happen to a possible length of finite time prediction if we consider dynamics under a more detailed description of the states of a system. One may think about that as considering more and more refined partitions of the phase space of the system in question.
Again, it seems to be natural that predictions in case when we consider more detailed characteristics of the system could be made for shorter times. However, it turned out that the situation is quite opposite \cite{BoldingBunimovich2019}, i.e., in fact, the length of the time interval, where finite time predictions are possible, is growing at least linearly with the size of an alphabet used to construct a partition of the phase space. In the present paper, we show, by a totally different analytic approach, that the length of a time interval, where predictions about a character of dynamics are possible. In fact, this time for the "simplest" random and chaotic systems grows exponentially with respect to the length of an alphabet, when we consider transport of orbits in phase spaces of such systems in more and more details.

\section{Some basic definitions and notations}

Our setup is quite similar to the one in the paper\cite{BoldingBunimovich2019}. Namely, we consider an ergodic dynamical systems that preserve absolutely continuous measure $\nu$. We also assume that these systems have  finite Markov partitions. Moreover, we assume that the corresponding symbolic representation is a full Bernoulli shift, i.e., all elements of a Markov partition have equal measures and all transition probabilities are equal each other. (In what follows, we will refer to this partition as a \textit{basic Markov partition}). Hence, if the partition consists of $q$ elements, then all transition probabilities are equal to $1/q$. Such dynamical systems were called \textit{fair-dice-like} (FDL) systems since the dynamic is equivalent to throwing a fair dice with $q$ faces. 

One of the simplest examples of an FDL-system is a doubling map of the unit interval $x \to 2x \$(mod{1})$,  with the basic Markov partition of $[0,1]$ into $(0,1/2)$ and $(1/2,1)$. 
Given a Markov partition $\mathcal{P}$, one can consider its \textit{refinement of the $m$-th order} by taking the intersection of all the preimages of all elements of the Markov partition from $0$-th to $(m-1)$-st order. Thus, such partitions are generated by $C^{i_0}_{\mathcal{P}} \cap T^{-1}C^ {i_1}_{\mathcal{P}}\cap T^{-2}C^{i_2}_{\mathcal{P}}\cap\dots\cap T^{m-1}C^{i_{m-1}}_{\mathcal{P}}$, where $C^{i_{k}}$ are the elements of the basic Markov  partition. It is well known that any refinement of a Markov partition is also a Markov partition.
Following \cite{Bunimovich2012} and \cite{BoldingBunimovich2019}, we present now the following 
\begin{definition}
    An uniformly hyperbolic dynamical system preserving Borel probability measure $\mu$ is called a fair-dice-like (or FDL) system if there exists a finite Markov partition $\mathcal{P}$ of its phase space $M$ such that for any integers $m$ and $\{i_k : 1 \leq i_k\leq  q\}_{k=0}^{m}$ one has
    \begin{equation*}
        \mu\left(C^{i_0}_{\mathcal{P}} \cap T^{-1}C^{i_1}_{\mathcal{P}}\cap T^{-2}C^{i_2}_{\mathcal{P}}\cap\dots\cap T^{m-1}C^{i_{m-1}}_{\mathcal{P}}\right)=\frac{1}{q^m}
    \end{equation*}
    where $q$ is the number of elements in the
partition $\mathcal{P}$ and $C^{i}_{\mathcal{P}}$ is the element number $i$ of $\mathcal{P}$.
\end{definition}

Hence, an FDL-system is a full Bernoulli shift with equal transition probabilities between states. Denote the elements of the basic Markov partition $\mathcal{P}$ by letters of an alphabet $\Omega$ of size $q$. The elements of the $k$-th refinement of the basic partition could then be uniquely coded by sequences of $k$ letters of $\Omega$. In what follows, we will call these codes words or strings. 

It provides a symbolic representation of the FDL-systems. Particularly, almost all points of the phase space $M$ have a corresponding unique representation by infinite sequences of letters from $\Omega$ corresponding to iterations $T^{i}$ of our map.
Consider a uniform probability distribution on the set of infinite sequences of the letters in our alphabet. Take now any element of a refined Markov partition, and call it a hole. This means that all points that visit this element in a course of dynamics escape from the phase space. Therefore, the question about the probability of escape at time $n$ through a hole, which is an element of a refined Markov partition, could be reformulated in terms of probabilities of the set of strings, which encounter for the first time a word corresponding to that element. More formally, let $h_w(n)$ be the number of strings of length $n$, which contain the word $w$ of length $k$ as their last $k$ characters, and do not have $w$ as a substring of $k$ consecutive characters in any other part of this string. Then the probability of escape exactly at the time $n$ through a hole corresponding to the word $w$ equals $P_h(w,n) = h_w(n+k)/q^{n+k}$. This probability is called the first hitting probability corresponding to the word  $w$. 
Our main goal is to analyze the relations between the first hitting probabilities for different words $w$ having the same length $k$. In more general terms, we are interested in 
characterizing finite-time transport of orbits in the phase spaces of dynamical systems. By comparing $h_w(n)$ for different words $w$, we will be able to identify which elements of a refinement of our Markov partition, and thus the points in the corresponding elements, spread over the phase space faster than the other elements.

\section{Main Results}

In order to understand the behavior of $h_w(n)$, it is convenient to use the notion of \textit{autocorrelation} of words introduced by J.Conway.
\begin{definition}
    Given a word $w$ of length $k$ the autocorrelation of $w$ is a sequence of $k$ numbers $b_k, b_{k-1},\dots,b_1$ where $b_i$ is equal to $1$ if the prefix of $w$ of the length $i$ coincides with the suffix of $w$ of the same length $i$, and $0$ otherwise.
\end{definition}
For example, the autocorrelation of the word $HTHTHT$ is $101010$, and the autocorrelation of $HTTHTH$ is $100001$. The autocorrelation of a word is related to its "internal periodicity". For instance,  the word $HTHTHT$ is generated by the repetition of the pattern $(HT)$. Therefore,the autocorrelation of $HTHTHT$ is $101010$, since for every even $i$ both the prefix of the size $i$ and the suffix of the size $i$ consist of $HT$ repeated $i/2$ times, while for odd $i$ the prefix starts with $HT$, and there is a mismatch because the suffix starts with $TH$. 

 We can compare (values of) autocorrelations by considering them as numbers written in the base $2$. For instance, using the example above, the autocorrelation of $HTHTHT$ is greater than the autocorrelation of $HTTHTH$ since the sequence $101010$ becomes $42$ and the sequence $100001$ becomes $33$.

 It turns out that autocorrelation can be effectively used to determine how many strings of a specified length either avoid a given word entirely or encounter it for the first time at the end of the string. In more detail, defined now autocorrelation polynomials provide an effective tool to answer this question. Given a word $w$ with the autocorellation $b_k b_{k-1}\dots b_1$ the \textit{autocorrelation polynomial} of the word $w$ is defined as

\begin{equation*}
    A_{w,z}(z)=b_k z^0+\dots+b_{1}z^{n-1}.
\end{equation*}

Then, $h_w(n)$, which is the number of words of length $n$ containing the word $w$ only once and at the end, has the following generating function (see Proposition I.4. in \cite{Flajolet_Sedgewick_2009}):

\begin{equation*}
    GF_{w,z}(z) = \sum_{n=0}^{\infty} h_w(n)z^n = \frac{z^k}{z^{k}+(1-q z)A_{w,z}(z)}.
\end{equation*}

For a further analysis, it is convenient to make the following change of variable $x=\frac{1}{z}$. 

Then one gets

\begin{equation*}
    GF_w(x)=GF_{w,z}(1/x)=\frac{1}{1+(x-q)A_{w}(x)} 
\end{equation*},

where in what follows

$$
A_{w}(x) = b_k x^{k-1} + b_{k-1}x^{k-2} +\dots+b_{1}x^{0} = x^{k-1} + b_{k-1}x^{k-2}+\dots+b_{1}x^{0},
$$

which is an \textit{inverse autocorrelation polynomial}. 
Observe that, in particular, for any two words $w$ and $w'$ with equal autocorrelations the corresponding generating functions $GF_w(x)$ and $GF_{w'}(x)$ are also equal. Hence $h_w(n) = h_{w'}(n)$ for all $n$. For different autocorrelations, however, the autocorrelation polynomials and the generating functions are different.  Thus there is at least one $n$, such  that $h_w(n)\neq h_{w'}(n)$. 

Given $w$, one can consider a piecewise-linear function, which is equal to $P_h(w,n)$ for any positive integer $n$, and it is linear between any  $n$ and $n+1$. In what follows, we will call this function a \textit{first hitting probability curve for $w$}. Since the dynamical systems in question are ergodic, the sum of all the first hitting probabilities is equal to $1$, i.e. $\sum_{n=0}^{\infty}P_h(w,n)=1$. Hence, for two different words $w$ and $w'$ the corresponding first hitting probabilities curves either coincide or intersect at least once. Equivalently, we have for some $n$ 
\begin{equation*}
    P_h(w,n) \leq P_h(w',n) \text{ and such that } P_h(w,n+1) > P_h(w',n+1)
\end{equation*}
or vice versa. 

One may naturally think that for an ergodic system any two such curves should intersect infinitely many times.
It turns out, however, that for any $w$ and $w'$ with different autocorrelations $P_h(w,n)$ and $P_h(w',n)$ intersection occurs \textit{exactly once}:

\begin{theorem*}\cite{BoldingBunimovich2019}]
    Consider an FDL-system. Let $w$ and $w'$ be words coding some elements of (possibly different) refinements of the basic Markov partition such that the autocorrelation of $w$ is greater than the autocorrelation of $w'$. Then there exists an $N > k$ such that $P_h(w,n) - P_h(w',n)\leq 0$ for $n < N$, and $P_h(w,n) - P_h(w',n)> 0$ for $n > N$.
\end{theorem*}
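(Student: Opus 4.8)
The plan is to work with the generating functions $GF_w(x)$ and $GF_{w'}(x)$ and reduce the statement about the sequences $h_w(n), h_{w'}(n)$ — equivalently the first hitting probabilities $P_h(w,n) = h_w(n+k)/q^{n+k}$ — to a statement about the sign pattern of a single rational function. Since $P_h(w,n)$ and $P_h(w',n)$ are both, up to the common factor $q^{-(n+k)}$, the coefficients of $GF_w$ and $GF_{w'}$, the difference $P_h(w,n)-P_h(w',n)$ has the same sign as the $n$-th Taylor coefficient (in the variable $z=1/x$) of $D(z) := GF_{w,z}(z) - GF_{w',z}(z)$. So it suffices to show this coefficient sequence changes sign exactly once, from $\le 0$ to $>0$, as $n$ increases. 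The same-length hypothesis forces both autocorrelation polynomials to have the same leading term $x^{k-1}$, so $A_w(x) - A_{w'}(x)$ has degree at most $k-2$, and the leading coefficient comparison is governed precisely by the base-$2$ ordering of the autocorrelations.

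Writing everything over a common denominator, $D(z) = \dfrac{(1-qz)\,\bigl(A_{w',z}(z) - A_{w,z}(z)\bigr)\,z^k}{\bigl(z^k+(1-qz)A_{w,z}(z)\bigr)\bigl(z^k+(1-qz)A_{w',z}(z)\bigr)}$, where the numerator bracket $A_{w',z}-A_{w,z}$ is a polynomial whose nonzero coefficient of lowest degree is negative exactly when the autocorrelation of $w$ exceeds that of $w'$ (this is where "greater autocorrelation as a base-$2$ number" enters). The next step is a location-of-zeros argument: one shows that each denominator factor $z^k+(1-qz)A_w(z)$ has a unique root $\rho_w$ in a neighbourhood of $1/q$ that is the dominant (smallest-modulus) singularity of $GF_{w,z}$, with all other roots strictly larger in modulus — this is the standard Goulden–Jackson / Flajolet–Sedgewick analysis of the "correlation polynomial" denominator, using that $A_w$ has nonnegative coefficients and $A_w(1/q) > 0$, so that $h_w(n) \sim c_w \rho_w^{-n}$ with $c_w>0$. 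The asymptotics then pin down the eventual sign: since $\sum_n P_h(w,n) = \sum_n P_h(w',n) = 1$, comparing $GF$ values at $z = 1/q$ (equivalently $x=q$) and the decay rates $\rho_w$ vs $\rho_{w'}$ shows that the word with the larger autocorrelation — hence larger $A_w(1/q)$, hence (after a monotonicity check) larger $\rho_w$, hence slower decay — eventually dominates, giving $P_h(w,n) - P_h(w',n) > 0$ for all large $n$.

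To upgrade "eventually one sign" to "exactly one sign change," the plan is to exploit that the numerator of $D(z)$, after cancelling the common $z^k$ and the factor $(1-qz)$ against pieces of the denominator is not available, so instead I would use a Descartes-type / total-positivity argument directly on the sequence $a_n := h_w(n+k) - h_{w'}(n+k)$. The key algebraic fact is that $a_n$ satisfies a linear recurrence whose characteristic polynomial is the product of the two denominator polynomials, and one shows the generating function $\sum a_n z^n$, being $D(z)$ times $q^{n+k}$-bookkeeping, is a rational function with exactly one sign change in the relevant coefficient sequence because its numerator polynomial $(1-qz)(A_{w',z}-A_{w,z})z^k$ has all its "mass" of one sign for low-degree terms; combined with the single dominant pole being a simple real pole of definite sign, a convexity/log-concavity or a direct induction on the recurrence shows no second sign change can occur. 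Concretely, I expect to argue: (i) the first $N$ differences are $\le 0$ (short words hit the hole "more uniformly" early on — the more self-overlapping word $w$ has \emph{fewer} first-hitting strings early because overlaps suppress fresh first occurrences); (ii) once positive, the ratio $a_{n+1}/a_n \to \rho_{w'}^{-1}\!/\rho_w^{-1} < 1$... wait, this needs the correct inequality $\rho_w > \rho_{w'}$ so the dominant term of $h_w$ decays slower — and a monotonicity argument on the recurrence keeps $a_n$ positive thereafter.

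\textbf{Main obstacle.} The genuinely hard part is (iii), proving there is \emph{no second} crossing: the early-time behaviour and the asymptotic behaviour are both under control, but ruling out oscillation in between requires real information about the subdominant roots of $z^k + (1-qz)A_w(z)$ and their interplay for the two words. I expect the cleanest route is to show the polynomial $P(z) := \bigl(z^k+(1-qz)A_w(z)\bigr)\bigl(z^k+(1-qz)A_{w'}(z)\bigr)$ has a suitable sign-regularity (e.g. its reciprocal's Taylor coefficients, twisted by the numerator, form a sequence with at most one sign change — a Pólya-frequency / variation-diminishing property), or, failing a slick structural proof, to bound $|a_n|$ away from $0$ in the positive regime by separating the dominant pole contribution from the rest using an explicit estimate on $\rho_w - \rho_{w'}$ in terms of the difference of autocorrelation polynomials evaluated near $1/q$. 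This estimate — quantitative separation of the dominant singularities controlled by the base-$2$ gap between the autocorrelations — is where I would concentrate the real work; everything else is bookkeeping with the Flajolet–Sedgewick generating function already quoted in the excerpt.
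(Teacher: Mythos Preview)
The paper does not prove this theorem: it is quoted, with citation, from \cite{BoldingBunimovich2019}, and is then used as a black box in Section~4.4 (``by Theorem~2.1 in \cite{BoldingBunimovich2019}, since there is only one intersection\dots''). So there is no ``paper's own proof'' here to compare against; everything substantive in the present paper about the crossing time takes the single-crossing property for granted and works on \emph{locating} that crossing.

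Your proposal is an honest plan rather than a proof, and you correctly isolate the crux: ruling out a second sign change of $a_n = h_w(n+k)-h_{w'}(n+k)$. The generating-function setup, the factorisation of $D(z)$, and the dominant-pole asymptotics are all fine and standard. But the step you label ``Main obstacle'' is the entire content of the theorem, and neither of the two routes you sketch closes it. A P\'olya-frequency / variation-diminishing argument would require the kernel $1/\bigl((z^k+(1-qz)A_{w,z})(z^k+(1-qz)A_{w',z})\bigr)$ to be totally positive in the relevant sense, and there is no reason to expect that: these denominators have complex roots off the real axis in general, which obstructs total positivity of the coefficient sequence. Your fallback, ``quantitative separation of the dominant singularities,'' is exactly what the present paper carries out in Section~4 --- but notice that the paper \emph{uses} the single-crossing theorem at the end of Section~4.4 rather than deriving it, so pole separation alone does not give uniqueness of the crossing; it only tells you the eventual sign.

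A smaller issue: you invoke a ``same-length hypothesis'' to get matching leading terms of $A_w$ and $A_{w'}$, but the statement as written allows $w$ and $w'$ to code elements of possibly different refinements, i.e.\ to have different lengths. Whatever argument you eventually produce has to survive $|w|\neq|w'|$. To actually prove the result you would need to consult \cite{BoldingBunimovich2019}; the mechanism there is not reproduced in this paper.
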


Therefore, the entire time interval gets partitioned into three intervals: the first interval where there are no intersections between different first-hitting probability curves. Therefore in this interval there is a hierarchy of these curves, where $P_w(n) \leq P_{w'}(n)$ if $A_w(2)> A_{w'}(2)$; the second interval where the all the intersections between the first probability curves occur, and therefore there is no hierarchy of the first hitting probability curves in this interval; and the third interval, starting after all the intersections occurred, where there is the inverse hierarchy of these curves. An example for $k=4$ is given in the Fig.\ref{fig:Intervals}.

\begin{figure}
    \centering
    \includegraphics[width=0.85\linewidth]{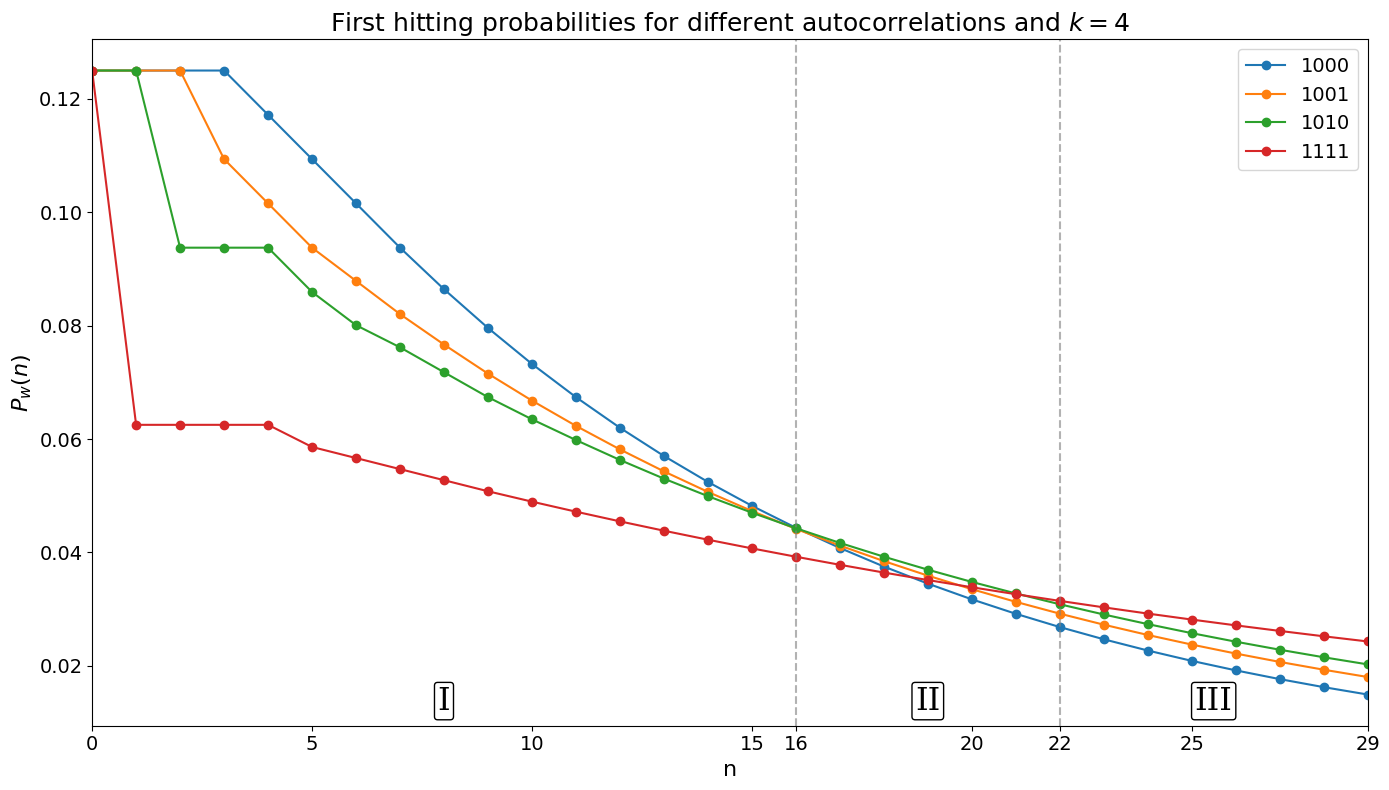}
    \caption{First hitting probabilities curves for all different possible autocorrelations of the words of the size $k=4$ (color coded as shown in the legend). The horizontal axis is time $n$, and the vertical axis is the first hitting probability $P_w(n)$. One can see that the time axis can be separated into three intervals -  the first one from $0$ to $16$ with the fixed hierarchy of probabilities; the second one from $16$ to $22$ when all the intersections happen; and the third one from $22$ to infinity with the hierarchy opposite to the one in the first time interval.}
    \label{fig:Intervals}
\end{figure}

Observe that the hierarchy in the third interval matches the hierarchy obtained by comparing the commonly used escape rates \cite{BruinDemersTodd2018} of the elements of the refined Markov partition. Therefore, it is of fundamental importance to understand the length of the first time interval, since only in this interval finite-time predictions are possible.

In \cite{BoldingBunimovich2019} it was demonstrated that the  length of the first time interval grows at least linearly in $k$, as one considers more and more refined partitions of the phase space. We will show that the length of the first  interval actually grows exponentially. Thus, our main result is the following one

\begin{theorem} \label{th:main}
    Given $0<\delta\leq 0.25$ there is $k_{\delta}$ such that for any two words $w$ and $w'$ of the same length $k\geq k_{\delta}$, where autocorrelation of $w$ is larger than the autocorrelation of $w'$, and $A_w(x)$ and $A_{w'}(x)$ are the inverse autocorrelation polynomials for $w$ and $w'$ correspondingly, the time $N$ of intersection between $h_w(n)$ and $h_{w'}(n)$ satisfies the following inequalities 
    \begin{equation} \label{eq: main-th bounds on N}
       (1-2\delta)\cdot C_{A_w/A_{w'}}\cdot q A_w(q) < N < (1+2\delta)\cdot C_{A_w/A_{w'}}\cdot q A_w(q)
    \end{equation}
    where 
    \begin{equation*}
    C_{A_w/A_{w'}} = \ln\left(1 + \frac{A_{w}(q)-A_{w'}(q)}{A_{w'}(q)}\right)\Big/\left(\frac{A_{w}(q)-A_{w'}(q)}{A_{w'}(q)}\right)
    ~~\text{ and }~~ 
    \frac{\ln q}{q} \leq C_{A_w/A_{w'}}\leq 1.
    \end{equation*}
\end{theorem}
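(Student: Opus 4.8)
The plan is to read off the large-$n$ behaviour of $h_w(n)$ from its generating function by classical singularity analysis, to turn the crossing condition $h_w(N)=h_{w'}(N)$ into an elementary transcendental equation, and then to show that the root of that equation equals $q\,A_w(q)\,C_{A_w/A_{w'}}$ up to a relative error $o_k(1)$ which is \emph{uniform} over all admissible pairs $w,w'$ of length $k$.

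First I would study the poles of $GF_w(x)=1/\bigl(1+(x-q)A_w(x)\bigr)$, i.e.\ the roots of $f_w(x):=1+(x-q)A_w(x)$. Since $A_w$ has non-negative coefficients and leading term $x^{k-1}$, $f_w$ has a unique real root $x_w^{*}=q-\varepsilon_w$ with $\varepsilon_w>0$ tiny, characterized by $\varepsilon_w A_w(q-\varepsilon_w)=1$; solving this fixed-point relation by Taylor expansion yields $\varepsilon_w=A_w(q)^{-1}\bigl(1+O(k\,q^{-k})\bigr)$, and, since $f_w'(x_w^{*})=\varepsilon_w^{-1}\bigl(1-\varepsilon_w^{2}A_w'(x_w^{*})\bigr)$, also $f_w'(x_w^{*})=\varepsilon_w^{-1}\bigl(1+O(k\,q^{-k})\bigr)$. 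One also needs that every other root of $f_w$ has modulus bounded away from $q$ by a constant depending only on $q$: this is the standard ``spectral gap'' for autocorrelation-type kernels (cf.~\cite{Flajolet_Sedgewick_2009} and the work of R\'egnier and Szpankowski), and it follows from a Rouch\'e-type argument together with the fact that the product of all roots of $f_w$ has modulus at most $q-1$. Coefficient extraction (in $z=1/x$) then gives $h_w(n)=\varepsilon_w\,(q-\varepsilon_w)^{\,n-1}\bigl(1+R_w(n)\bigr)$ with $|R_w(n)|=O(\mu^{\,n})$ for some $\mu<1$ depending only on $q$; equivalently, the first-hitting time of $w$ is asymptotically geometric with parameter $\varepsilon_w/q$, and one has the exact identities $\sum_n h_w(n)q^{-n}=1$ and $\sum_n n\,h_w(n)q^{-n}=q\,A_w(q)$ (Conway's leading-number identity), which is where $q\,A_w(q)$ in the statement comes from.

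Inserting these expansions into $h_w(N)=h_{w'}(N)$ and taking logarithms, with $b:=\ln\bigl((q-\varepsilon_w)/(q-\varepsilon_{w'})\bigr)>0$ (positive because $A_w(q)>A_{w'}(q)$ forces $\varepsilon_w<\varepsilon_{w'}$), gives $N-1=\bigl(\ln(\varepsilon_{w'}/\varepsilon_w)+O(\mu^{N})\bigr)/b$. Since $N$ is of order $q^{k}$ whereas $b\asymp\varepsilon_{w'}-\varepsilon_w\gtrsim q^{-2k}$, the $O(\mu^{N})$ term is doubly exponentially negligible, and using $b=\tfrac{\varepsilon_{w'}-\varepsilon_w}{q}\bigl(1+O(q^{-k})\bigr)$ together with the definition of $C_{A_w/A_{w'}}$ the theorem reduces to
\begin{equation*}
\frac{q\,\ln(\varepsilon_{w'}/\varepsilon_w)}{\varepsilon_{w'}-\varepsilon_w}\;=\;q\,A_w(q)\,C_{A_w/A_{w'}}\,\bigl(1+o_k(1)\bigr)\quad\text{uniformly in }w,w'.
\end{equation*}
The side bounds $\tfrac{\ln q}{q}\le C_{A_w/A_{w'}}\le 1$ are immediate, since $C_{A_w/A_{w'}}=\tfrac{\ln(1+t)}{t}$ with $t=(A_w(q)-A_{w'}(q))/A_{w'}(q)$, and $0<t<\tfrac1{q-1}$ because $A_w(q)\le\tfrac{q^{k}-1}{q-1}$ and $A_{w'}(q)\ge q^{k-1}$, while $t\mapsto\tfrac{\ln(1+t)}{t}$ is decreasing with limit $1$ at $0$ and value $(q-1)\ln\tfrac{q}{q-1}\ge 1-\tfrac1q\ge\tfrac{\ln q}{q}$ at $t=\tfrac1{q-1}$.

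The crux — and what I expect to be the main obstacle — is the uniform error estimate in the displayed identity. Writing $\rho_w:=A_w(q)^{-1}$ and $\eta_w:=A_w'(q)/A_w(q)^{2}$, replacing $\varepsilon_w$ by $\rho_w$ everywhere introduces corrections governed by $\eta_{w'}-\eta_w$ (this also absorbs the $f_w'(x_w^{*})$-corrections) and by $\rho_{w'}\eta_{w'}-\rho_w\eta_w$; expanding these as differences supported on the positions where the two autocorrelations disagree gives $|\eta_{w'}-\eta_w|=O\!\bigl(k\,q^{\,d-2k+1}\bigr)$ and $|\rho_{w'}\eta_{w'}-\rho_w\eta_w|=O\!\bigl(k\,q^{\,d-3k+2}\bigr)$, where $d-1=\deg(A_w-A_{w'})$. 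Against the ``signals'' $\ln(\rho_{w'}/\rho_w)\asymp\Delta\,q^{-k+1}$ and $\rho_{w'}-\rho_w\asymp\Delta\,q^{-2k+2}$ with $\Delta:=A_w(q)-A_{w'}(q)\ge 1$, all three relative errors are $O\!\bigl(k\,q^{\,d-k}/\Delta\bigr)$. This is $o_k(1)$ only if $q^{d}/\Delta$ is controlled, and here the combinatorial structure of correlation polynomials must be used: if $\Delta<q^{\,d-3}$ then the top two lower coefficients of $A_w-A_{w'}$ are forced to equal $-1$, hence $w'$ has the two consecutive periods $k-d+1$ and $k-d+2$, and by the Fine--Wilf theorem this forces $w'$ to be constant whenever $d\ge\tfrac k2+1$ — contradicting $b_d(w')=0$. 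Thus for every admissible pair either $d\ge\tfrac k2+1$ and $\Delta\ge q^{\,d-3}$ (so the relative error is $O(k\,q^{3-k})$), or $d\le\tfrac k2+1$ and $q^{d}/\Delta\le q^{k/2+1}$ (so the relative error is $O(k\,q^{1-k/2})$); in both cases it is $o_k(1)$ uniformly. Taking $k_\delta$ so that this error is $<2\delta$ for $k\ge k_\delta$, and using the Bolding--Bunimovich theorem~\cite{BoldingBunimovich2019} to know that $N$ is the (unique) crossing, yields the claimed two-sided bounds on $N$.
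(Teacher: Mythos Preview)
Your overall strategy coincides with the paper's: locate the dominant root $x_w^{*}=q-\varepsilon_w$ of $1+(x-q)A_w(x)$, extract the leading exponential $c_w(x_w^{*})^{n-1}$ via partial fractions, solve $c_w(x_w^{*})^{N}=c_{w'}(x_{w'}^{*})^{N}$ for $N$, and show that the subdominant terms do not move the crossing. Two points are worth flagging.

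\textbf{A genuinely different uniformity argument.} To show that the second-order corrections (those involving $A_w'(q)$) are negligible \emph{uniformly} in the pair $w,w'$, the paper splits on whether the top differing index $i_{\neq}$ exceeds $2k/3+1$ and, in the large-$i_{\neq}$ case, invokes the explicit structure of correlation sets from \cite{BoldingBunimovich2019} (the ``arithmetic-progression'' Lemma~4.6 here) to force $A_w(q)-A_{w'}(q)\ge q^{2k/3-1}$; this yields relative error $O(kq^{-k/3+4})$. Your Fine--Wilf route is different and cleaner: splitting at $d\approx k/2$ and using that two consecutive periods on $w'$ would force $w'$ constant gives $\Delta\ge q^{d-3}$ whenever $d\ge k/2+1$, hence relative error $O(kq^{-k/2})$, which is sharper. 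Both arguments exploit the same phenomenon (correlation sets are sparse near the top), but yours avoids the case analysis of Lemma~4.6 and improves the exponent in the error term.

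\textbf{An unfinished step you should not skip.} Your sentence ``$|R_w(n)|=O(\mu^{n})$ \dots\ hence the $O(\mu^{N})$ term is doubly exponentially negligible'' hides exactly the work the paper does most carefully. The spectral gap (subdominant roots have modulus $\le 1.65$, uniformly in $k$) needs an honest Rouch\'e argument---the product-of-roots remark alone does not give it---and, more importantly, the implied constant in $O(\mu^{n})$ depends on the partial-fraction coefficients $c_{ij}$, which in turn depend on the separation of the subdominant roots and on possible multiplicities. The paper bounds these via Fa\`a di Bruno and Rump's root-separation theorem, obtaining $|c_{ij}|\le \exp(C k^{3}\log k)$; only with such a bound does ``doubly exponentially negligible at $N\asymp q^{k}$'' become a theorem rather than a heuristic. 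Your outline is correct, but this control on the $k$-dependence of the constant must be supplied before the $O(\mu^{N})/b$ term can be discarded against $b\gtrsim q^{-2k}$.
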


Hence, the lower bound on the rate of exponential growth of the length of the first time interval follows directly from this Theorem \ref{th:main}:
\begin{corollary}
    Consider $k_{0.25}$ from the Theorem \ref{th:main}. Then for any two words $w$ and $w'$ of the same length $k\geq k_{0.25}$ and with different autocorrelations the time $N$ of intersection between the curves $h_w(n)$ and $h_{w'}(n)$ is at least so large as
     \begin{equation*}
       N> \frac{\ln q}{2} \cdot q^{k-1}.
    \end{equation*}
    
\end{corollary}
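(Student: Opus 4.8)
The plan is to read the bound off directly from Theorem~\ref{th:main} by specializing $\delta = 0.25$ and then estimating each factor appearing in the lower bound of \eqref{eq: main-th bounds on N} from below. Since $w$ and $w'$ have different autocorrelations, exactly one of them carries the strictly larger autocorrelation; relabelling if necessary, I would assume it is $w$. This relabelling does not affect $N$, which is simply the abscissa at which the two curves $h_w(n)$ and $h_{w'}(n)$ cross, so there is no loss of generality. Because $k \geq k_{0.25}$, Theorem~\ref{th:main} applies with $\delta = 0.25$, and since $1 - 2\delta = \tfrac12$ it yields
\begin{equation*}
   N > \tfrac12\, C_{A_w/A_{w'}}\, q\, A_w(q).
\end{equation*}

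Next I would bound the two remaining quantities from below. For the first factor, Theorem~\ref{th:main} already records the inequality $C_{A_w/A_{w'}} \geq (\ln q)/q$, so it suffices to invoke it. For the second factor, I would use that the full prefix of $w$ trivially coincides with its full suffix, hence $b_k = 1$, so the inverse autocorrelation polynomial $A_w(x) = x^{k-1} + b_{k-1}x^{k-2} + \dots + b_1$ is monic of degree $k-1$ with nonnegative coefficients. Evaluating at $x = q$ therefore gives $A_w(q) \geq q^{k-1}$, and consequently $q\,A_w(q) \geq q^k$.

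Combining the three estimates produces
\begin{equation*}
   N > \tfrac12 \cdot \frac{\ln q}{q} \cdot q^{k} = \frac{\ln q}{2}\, q^{k-1},
\end{equation*}
which is the asserted inequality; note that $q \geq 2$, so $\ln q > 0$ and the bound is genuinely nontrivial. I do not expect a real obstacle here: the corollary is a one-line consequence of Theorem~\ref{th:main} once the monicity of $A_w$ and the stated lower bound on $C_{A_w/A_{w'}}$ are in hand. The only point that needs a word of care is the reduction to the case in which $w$ is the word with the larger autocorrelation, so that the factor $A_w(q)$ — and not the possibly smaller $A_{w'}(q)$ — is the one being estimated; once that normalization is fixed, the rest is routine arithmetic.
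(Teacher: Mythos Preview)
Your proposal is correct and is precisely the derivation the paper has in mind when it says the corollary ``follows directly'' from Theorem~\ref{th:main}: specialize $\delta=0.25$, use the stated bound $C_{A_w/A_{w'}}\ge (\ln q)/q$, and $A_w(q)\ge q^{k-1}$ from $b_k=1$. The relabelling remark is the right hygiene step and nothing more is needed.
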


\begin{remark}
    For a fixed $\delta$ (e.g. for $\delta=0.1$), by making a proof much more technical, one can show that the bound (\ref{eq: main-th bounds on N}) holds at least for all $k>21$ if $q=2$, at least for all $k>12$ if $q=3$, and  at least for $k>9$ if $q=4$ etc, which provide better bounds for larger $q$.  
    
    However, the extensive numerical computations demonstrated that the minimal $k$, for which the bound is true, in fact, is much smaller. For instance, if $q=2$, then already for $k=6$ and for all possible pairs of words $w$ and $w'$ with different autocorrelations, the maximal deviation of the intersection time $N$ from $C_{A_w/A_{w'}}\cdot q A_w(q)$ is $7.71\%$, which is actually decreasing after that. Starting with $k=13$, it is smaller than $1\%$. If $q=3$, then already for $k=7$ it equals $0.25\%$, and for $q=4$ this deviation becomes smaller than $1\%$ starting with $k=4$. 

    This difference in the minimal value of $k$ is due to a very conservative bound in Proposition \ref{proposition: bound on coefs other exponents} for the values of non-dominating exponents. By improving these estimates, one could obtain a proof for smaller values of $k$ as well.
\end{remark}

\section{Proof of the Main Theorem \ref{th:main}}

Recall that for a given $w$ the generating function of $h_w(n)$ is
\begin{equation*}
    GF_{w,z}(z)=\frac{z^k}{z^{k}+(1-q z)A_{w,z}(z)},
\end{equation*}

After a change of variables $x=\frac{1}{z}$ it becomes

\begin{equation*}
    GF_w(x)=GF_{w,z}(1/x)=\frac{1}{1+(x-q)A_{w}(x)}, 
\end{equation*}

where $A_{w}(x)$ is the inverse autocorrelation polynomial of $w$.
Denote  $P_w(x)={1+(x-q)A_{w}(x)}$, and call it \textit{a recurrence polynomial of $w$}. Let $\{\alpha_1,\dots,\alpha_r\}$ be all the distinct roots of $P_w(x)$ with the corresponding multiplicities $\{m_1,\dots, m_r\}$. Then, the Partial fraction decomposition of $GF_w(x)$ represents it as the sum of the following terms

\begin{equation} \label{eq:single_reciprical_term}
    \frac{c_{ij}}{(x - \alpha_i)^j} = \frac{c_{ij}}{(1/z - \alpha_i)^j} =  \frac{z^jc_{ij}}{(1 - \alpha_i z)^j}
\end{equation}

for $i=1\ldots r$ and $j=1\ldots m_i$. For $j=1$ the expression (\ref{eq:single_reciprical_term}) expands to a geometric progression:

\begin{equation*} 
    c_{ij}\sum_{n=0}^{\infty}(\alpha_i)^n z^{n+1} = c_{ij}\sum_{n=1}^{\infty}(\alpha_i)^{n-1}z^{n}.
\end{equation*}

For $j\geq 2$ (by following Section 5.4 in \cite{graham94}) one can rewrite (\ref{eq:single_reciprical_term}) as

\begin{multline*} 
    \frac{z^jc_{ij}}{(1 - \alpha_i z)^j} = c_{ij}\sum_{n=0}^{\infty}\binom{j-1+n}{j-1}(\alpha_i)^n z^{n+j} = c_{ij}\sum_{n=0}^{\infty}\frac{(n+1)\ldots(n+j-1)}{(j-1)!}(\alpha_i)^n z^{n+j} = \\
    = c_{ij}\sum_{n=j}^{\infty}\frac{(n-j+1)\ldots(n-1)}{(j-1)!}(\alpha_i)^{n-j} z^{n}.
\end{multline*}.

Consequently, the coefficient of $z^{n}$ in the series expansion of the generating function $GF_{w,z}(z)$ is

\begin{equation} \label{eq: combinatoric expression of h_w(n)}
    h_w(n)=\sum_{\alpha_i}\sum_{j=1}^{m_i} c_{ij}\binom{n-1}{j-1}(\alpha_i)^{n-j}.
\end{equation}

Therefore, to describe the behavior of the function $h_w(n)$, we need to estimate the roots of $P_w(x)$ together with their multiplicities, as well as the coefficients $c_{ij}$.
 
Our analysis is organized in the following way. First, given a word $w$, we obtain bounds on the roots of $P_w(x)$ and on the coefficients $c_{ij}$. Next, we derive sharper estimates for the largest root $\alpha_m$ of $P_w(x)$ and for the corresponding coefficient $c_{i_m 1}$.

Next, we show that for two words $w$ and $w'$ of the same length $k$ but with different autocorrelations, and for their corresponding largest roots $\alpha_m$ and $\alpha_{m}'$ and coefficients $c_{i_m 1}$ and $c_{i_{m}' 1}$, a moment of time $n$, when the graphs of functions $c_{i_m 1}(\alpha_m)^n$ and $c_{i_{m}' 1}(\alpha_{m}')^n$ intersect, scales exponentially with $k$. These exponents will be referred to as \emph{main or leading exponents} of the expansion (\ref{eq: combinatoric expression of h_w(n)}). Finally, we show that the contributions to (\ref{eq: combinatoric expression of h_w(n)}) coming from all other terms (\emph{non-dominant exponents}) are relatively negligible, and therefore the moment of intersection of the functions $h_w(n)$ and $h_{w'}(n)$ also scales exponentially with $k$.

\subsection{Non-dominant exponents of \texorpdfstring{$h_w(n)$}{hw(n)}}

\subsubsection{Estimates of roots of the recurrence polynomials}

\begin{proposition} \label{proposition: bound-abs-value-roots}
    For any word $w$ of the length $k\geq 9$, the corresponding recurrence polynomial $P_w(x)$ has one real "maximal" root in the interval $(q-q^{-k+2}, q)$ and $k-1$ roots counted up to multiplicity that have absolute values less than $1.65$.
\end{proposition}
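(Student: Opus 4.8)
The plan is to prove the two assertions separately, the second being the real content. Throughout write $\tau=\tfrac{1+\sqrt5}{2}$ for the golden ratio.

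\emph{The maximal root.} Since $A_w$ has nonnegative coefficients and leading term $x^{k-1}$, one has $A_w(x)\ge x^{k-1}>0$ for every $x>0$, so $P_w(x)=1+(x-q)A_w(x)\ge 1$ on $[q,\infty)$ and $P_w$ has no root there. At $x_0=q-q^{-k+2}$, Bernoulli's inequality gives $A_w(x_0)\ge x_0^{k-1}=q^{k-1}(1-q^{-k+1})^{k-1}\ge q^{k-1}-(k-1)$, and a short computation then yields $P_w(x_0)=1-q^{-k+2}A_w(x_0)\le 1-q+(k-1)q^{-k+2}$, which is negative precisely when $(q-1)q^{k-2}>k-1$ --- and this is exactly where the hypothesis $k\ge 9$ is used (for $q\ge 2$, $k\ge 9$ it clearly holds). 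Since $P_w(q)=1>0$, the intermediate value theorem produces a real root in $(q-q^{-k+2},q)$. Because $q-q^{-k+2}\ge 2-2^{-7}>1.65$, this root has modulus $>1.65$, so once the second claim is established it is automatically the root of largest modulus and the only one outside $\{|x|\le 1.65\}$.

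\emph{The remaining roots.} Comparing the two forms of the generating function in the excerpt (or directly from the definitions) gives $A_{w,z}(z)=z^{k-1}A_w(1/z)$, so under $x=1/z$ the roots of $P_w$ in $\{|x|<1.65\}$ correspond to the roots of $D(z):=z^k+(1-qz)A_{w,z}(z)$ in $\{|z|>1/1.65\}$. It therefore suffices to show that $D$ has exactly one root in $\{|z|\le 1/1.65\}$, the one near $z=1/q$ corresponding to the maximal root above. I would do this by Rouch\'e's theorem on the circle $|z|=1/1.65$ with comparison function $g(z)=(1-qz)A_{w,z}(z)$: there $|D(z)-g(z)|=|z|^k=1.65^{-k}$ is tiny, while $|1-qz|\ge q/1.65-1\ge 0.2$; so the argument reduces to two facts about $A_{w,z}$: (i) $A_{w,z}$ has no zero in $\{|z|\le 1/1.65\}$, and (ii) $|A_{w,z}(z)|\ge c$ on $\{|z|=1/1.65\}$ for an (absolute) constant $c>1.65^{-k}/0.2$. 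Granting (i)--(ii), $g$ has exactly one zero in the disc (the zero $1/q$ of $1-qz$), hence so does $D$; unwinding the substitution gives exactly $k-1$ roots of $P_w$ with $|x|<1.65$, which together with the first part proves the proposition. Fact (i) is the easy one: writing $A_w(x)=x^{s}\widetilde A(x)$ after dividing out any vanishing low-order coefficients, $\widetilde A$ has coefficients in $\{0,1\}$ with leading and constant coefficient $1$, so by the classical bound for zeros of $\{0,1\}$-polynomials every zero of $\widetilde A$ --- hence of $A_w$ --- has modulus $<\tau<1.65$; reciprocally, every zero of $A_{w,z}$ has modulus $>1/\tau>1/1.65$. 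This is presumably the origin of the constant $1.65$.

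\emph{Where the work is.} The genuine obstacle is the quantitative bound (ii). A naive estimate $|A_{w,z}(z)|\ge\bigl(\operatorname{dist}(\{|z|=1/1.65\},\text{zeros of }A_{w,z})\bigr)^{\deg}$ decays like $(1/\tau-1/1.65)^{k}$, far faster than $1.65^{-k}$, and a term-by-term triangle inequality is useless since $A_{w,z}$ can have many nonzero coefficients; so one must exploit the actual geometry of the zeros of autocorrelation polynomials. The tool for this is the Guibas--Odlyzko/Fine--Wilf structure of string autocorrelations: apart from a small collection of high-order periods, the period set of a length-$k$ word is the arithmetic progression $\{0,p,2p,\dots\}$ with $p$ the minimal period, so that $A_{w,z}(z)$ is a perturbation of a ``cyclotomic-type'' polynomial $\frac{z^{N}-1}{z^{p}-1}$, all of whose zeros lie on the unit circle; the main term stays bounded below on $|z|=1/1.65$ thanks to its geometric-series form, and the exceptional terms, being supported on large exponents, are small inside the unit disc. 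The constant word ($A_{w,z}(z)=\frac{z^{k}-1}{z-1}$) has to be treated directly, and carrying the estimate all the way down to $k=9$ --- the range where only finitely many words need checking anyway --- requires a careful bound on how many exceptional periods a length-$k$ word can have rather than a wasteful count. This constant-chasing is where I expect the bulk of the effort to lie.
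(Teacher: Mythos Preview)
Your overall architecture matches the paper's: both arguments are Rouch\'e on the circle of radius $1.65$, comparing $P_w(x)=1+(x-q)A_w(x)$ with $(x-q)A_w(x)$ (your passage to the $z$-variable is cosmetic --- your inequality $|z|^k<|(1-qz)A_{w,z}(z)|$ on $|z|=1/1.65$ is literally $1<|(x-q)A_w(x)|$ on $|x|=1.65$). The treatment of the maximal root and the invocation of the golden-ratio bound on the zeros of $A_w$ are also the same.

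Where you diverge is precisely in how to get the lower bound on $|A_w|$ along $|x|=1.65$, which you correctly flag as the crux. You propose to push through the Guibas--Odlyzko/Fine--Wilf structure of the period set and treat $A_{w,z}$ as a perturbation of a cyclotomic-type polynomial, conceding that the constant-chasing down to $k=9$ will be the bulk of the work. The paper sidesteps all of this with a one-line trick that uses \emph{no} autocorrelation structure whatsoever: multiply by $(x-b_{k-1})$. Since the leading two coefficients of $A_w$ are $1$ and $b_{k-1}$, the product $(x-b_{k-1})A_w(x)$ has leading term $x^k$, vanishing $x^{k-1}$ coefficient, and all remaining coefficients $b_i-b_{k-1}b_{i+1}\in\{-1,0,1\}$. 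The reverse triangle inequality then gives, on $|x|=r$,
\[
|(x-b_{k-1})A_w(x)|\;\ge\; r^k-\sum_{i=0}^{k-2}r^i\;=\;\frac{r^{k-1}(r^2-r-1)+1}{r-1},
\]
and combining with $|x-q|\ge q-r$ and $|x-b_{k-1}|\le r+1$ yields $|(x-q)A_w(x)|>1$ for $r=1.65$ and $k\ge 9$ by a single numerical check (the paper gets $1.01251$ at $k=9$, $q=2$). Since $r^2-r-1>0$ at $r=1.65>\tau$, the bound is increasing in $k$ and the argument is complete. So the ``genuine obstacle'' you identified dissolves; the estimate works uniformly for any monic $\{0,1\}$-coefficient polynomial, and the period-set analysis you outline is unnecessary.
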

\begin{proof}
We start by noticing that $P_w(x)$ has a real root close to $q$. Indeed, $P_w(q) = 1+ (q-q)A_w(q)=1 > 0$ and 
\begin{multline*}
    P_w(q-\frac{1}{q^{k-2}}) = 1 - \frac{1}{q^{k-2}} A_w(q-\frac{1}{q^{k-2}})\leq 1 - \frac{1}{q^{k-2}} \left(q-\frac{1}{q^{k-2}}\right)^{k-1} = \\
    = 1 - \frac{1}{q^{k-2}} q^{k-1}\left(1-\frac{1}{q^{k-1}}\right)^{k-1} 
    < 1 - q\left(1-(k-1)\frac{1}{q^{k-1}}\right) \leq 0
\end{multline*},
since $\frac{k-1}{q^{k-1}}\leq\frac{1}{2}$ for $k\geq3$, and $q\geq 2$. Hence, there is a real root of $P_w(x)$ in the interval $(q-\frac{1}{q^{k-2}},q)$. In Section \ref{subsec:Estimate of the root} we will obtain a better estimate of this root by taking into account  the autocorrelation of $w$.

Consider now the other roots of $P_w(x)$. We will use a standard technique of applying the Rouché theorem (see below) to different parts of the polynomial.
    \begin{theorem*}[Rouché]
        Let $\gamma$ be a closed contour and $D$ be the region enclosed by $\gamma$. Let also $f$ and $g$ be complex-valued functions which are holomorphic in $D$. Let $|g(z)| < |f(z)|$ on $\gamma$. Then $f$ and $f + g$ have the same number of zeroes in $D$ counted up to their multiplicity.
    \end{theorem*}

Consider $(x-b_{k-1})A_w(x)$\footnote{This particular trick can be found in the discussion of a somewhat similar subject in \cite{math.stackexchange}}, then 
\begin{equation*}
(x-b_{k-1})A_w(x)= x^k + \left[\sum_{i=1}^{k-2} (b_{i} - b_{k-1}b_{i+1})x^i\right] - b_{k-1}b_1.
\end{equation*}

One can notice that $- b_{k-1}b_1$ as well as $(b_{i} - b_{k-1}b_{i+1})$ for $1\leq i\leq k-2$ can only assume the values $-1$, $0$ or $1$. Fix  now $r=1.65$. Then for all $x$, such that $|x|=r$, the following relation holds.
\begin{equation*}
    |(x-b_{k-1})A_w(x)| \geq r^k - \sum_{i=0}^{k-2} r^i = r^k - \frac{r^{k-1} - 1}{r-1} = \frac{r^{k-1}(r^2-r-1)+1}{r-1}.
\end{equation*}
Hence,

\begin{equation*}
    |(x-q)A_w(x)| \geq \frac{r^{k-1}(r^2-r-1)+1}{r-1}\frac{|x-q|}{|x-b_{k-1}|}\geq \frac{r^{k-1}(r^2-r-1)+1}{r-1}\frac{q-r}{r+1}.
\end{equation*}

One can check that, if $k=9$ and $r=1.65$, then the following relation  holds

\begin{equation*}
    \frac{r^{k-1}(r^2-r-1)+1}{r-1}\frac{q-r}{r+1}\geq\frac{r^{k-1}(r^2-r-1)+1}{r-1}\frac{2-r}{r+1} = 1.01251 > 1.
\end{equation*}

For the chosen $r=1.65$, the value of $r^2-r-1 = 0.0725$ is positive. Hence, the expression $\frac{r^{k-1}(r^2-r-1)+1}{r-1}\frac{q-r}{r+1}$ is a monotonically increasing function of $k$. That means that the above inequality holds for all $k\geq9$, and therefore it holds for $k\geq9$

\begin{equation*}
    |(x-q)A_w(x)| > 1 \text{ for x with $|x|=r$.}
\end{equation*}

The Rouché theorem implies that $P_w(x) = 1+ (x-q)A_w(x)$ has the same number of roots within the open disk $|x|<r$ as $(x-q)A_w(x)$.

Observe now that the coefficients of $A_w(x)$ have values in $\{0,1\}$. It is known \cite{Odlyzko1993} that all such polynomials have roots bounded in absolute value by the golden ratio $\frac{1+\sqrt{5}}{2}<1.65$. Therefore, $(x-q)A_w(x)$ has only one root $x=q$ with absolute value greater than or equal to $r=1.65$. By the Rouché theorem, the same is true for $P_w(x)$.

\end{proof}

\begin{remark}
    Numerical simulations for all polynomials $P_w(x)$ with degrees $k\leq 20$ and for different values of $q$ demonstrate that, in fact, absolute values of all roots, except for the largest one, are bounded by the golden ratio. However, proving this would require more involved analysis, while for our purposes a bound of $1.65$ is sufficient.
\end{remark}

\subsubsection{Estimates of coefficients in front of exponential terms}

\begin{proposition} \label{proposition: bound on coefs other exponents}
    Consider a root $\alpha_i$ with multiplicity $m_i$ of $P_w(x)$, which is not the maximal root. Then for $1\leq j \leq m_i$ the absolute value of a non-zero coefficient $c_{ij}$ has the following bound.
    \begin{equation*}
        |c_{ij}|\leq 2^{k^2+k}\cdot e^{3k^2/2 + 2k}\cdot q^{k^3+k}\cdot k^{3k^3/2+3k^2+2k}.
    \end{equation*}
\end{proposition}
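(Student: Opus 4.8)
The plan is to bound each coefficient $c_{ij}$ via the standard residue/partial-fraction formula and then crudely estimate every ingredient in that formula. Recall that for a root $\alpha_i$ of multiplicity $m_i$ of $P_w(x)$, the partial-fraction coefficient $c_{ij}$ associated to $(x-\alpha_i)^{-j}$ is
\[
c_{ij} = \frac{1}{(m_i-j)!}\,\lim_{x\to\alpha_i}\frac{d^{m_i-j}}{dx^{m_i-j}}\left[(x-\alpha_i)^{m_i}\,GF_w(x)\right]
       = \frac{1}{(m_i-j)!}\,\lim_{x\to\alpha_i}\frac{d^{m_i-j}}{dx^{m_i-j}}\left[\frac{(x-\alpha_i)^{m_i}}{P_w(x)}\right].
\]
So the first step is to write $P_w(x)=(x-\alpha_i)^{m_i}\,Q_i(x)$ where $Q_i(x)=\prod_{\ell\neq i}(x-\alpha_\ell)^{m_\ell}$ is a polynomial of degree $k-m_i$, so that $(x-\alpha_i)^{m_i}/P_w(x)=1/Q_i(x)$, and $c_{ij}$ becomes $\tfrac{1}{(m_i-j)!}$ times a derivative of $1/Q_i$ at $\alpha_i$.

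First I would get a lower bound on $|Q_i(\alpha_i)|$ and, more generally, control the Taylor expansion of $1/Q_i$ near $\alpha_i$. The point is that $Q_i(\alpha_i)\neq 0$ by construction, but we need a quantitative lower bound. Here I would use that $P_w$ has integer coefficients (so does $A_w$), that its leading coefficient is $1$, and a resultant/discriminant-type argument: $|Q_i(\alpha_i)|=\prod_{\ell\neq i}|\alpha_i-\alpha_\ell|^{m_\ell}$, and the spacing between distinct roots of an integer polynomial of degree $k$ with coefficients bounded in absolute value by the coefficients of $P_w$ is bounded below by an explicit Mahler-measure / Mignotte separation bound. Concretely, the coefficients of $P_w(x)=1+(x-q)A_w(x)$ are bounded by roughly $q$ in absolute value (each coefficient is $b_{i-1}-q b_i$ plus lower-order corrections), the Mahler measure is at most something like $(k+1)^{1/2} q$ times a power of the maximal root (which is $<q$), and Mignotte's bound then gives $\min_{\ell\neq i}|\alpha_i-\alpha_\ell| \gtrsim (\sqrt{k+1}\,M)^{-k}$ or so, with $M$ the Mahler measure. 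Raising this to the power $k-m_i\le k$ and noting each $|\alpha_i-\alpha_\ell|\le 2q$ for an upper bound, one gets a two-sided estimate $q^{-O(k^2)}\le |Q_i(\alpha_i)|\le q^{O(k)}$, and similar bounds for all derivatives $Q_i^{(s)}(\alpha_i)$ up to order $k$ by differentiating the product and using $|\alpha_i|<q$.

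Next I would convert bounds on $Q_i$ and its derivatives into bounds on the derivatives of $1/Q_i$ at $\alpha_i$, which is the combinatorial heart of the estimate: by Faà di Bruno (or by repeatedly differentiating the identity $Q_i\cdot(1/Q_i)=1$ and solving the resulting triangular linear system), the $s$-th derivative of $1/Q_i$ at $\alpha_i$ is a sum of at most $s!$-many products of the form $\pm Q_i(\alpha_i)^{-(t+1)}\prod Q_i^{(\cdot)}(\alpha_i)$ with at most $s$ factors in the numerator. Plugging in $|Q_i(\alpha_i)|\ge q^{-ck^2}$ for the denominator (so $|Q_i(\alpha_i)|^{-(s+1)}\le q^{ck^2(s+1)}\le q^{ck^3}$ for $s\le k$) and $|Q_i^{(\cdot)}(\alpha_i)|\le q^{c'k}\cdot k!$ or so for the numerator factors, together with the $s!\le k!$ count of terms and the $1/(m_i-j)!\le 1$ prefactor, yields a bound of the shape $|c_{ij}|\le (\text{const})^{k}\cdot k^{O(k^3)}\cdot q^{O(k^3)}$. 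One then just has to be careful with the explicit constants — tracking the $2$'s (number of sign choices / binomial coefficients), the $e$'s (from $k!\le e\sqrt{k}(k/e)^k$ or from bounding $\sum$ of subsets), and the exact powers of $k$, $q$ — to land exactly on $2^{k^2+k}\,e^{3k^2/2+2k}\,q^{k^3+k}\,k^{3k^3/2+3k^2+2k}$. These are deliberately generous, so at each inequality one can afford to throw away and round up.

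The main obstacle, and the only genuinely delicate point, is the root-separation lower bound $|Q_i(\alpha_i)|=\prod_{\ell\neq i}|\alpha_i-\alpha_\ell|^{m_\ell}\ge q^{-O(k^2)}$: this is where we must use arithmetic (integrality of the coefficients of $P_w$) rather than just analysis, since a priori roots of a real polynomial can be arbitrarily close. I would get this from the fact that $|Q_i(\alpha_i)|$ is, up to the integer $\prod_{\ell\ne i}\!\big(\text{product over Galois conjugates}\big)$ bookkeeping, essentially $|\mathrm{disc}(P_w)|^{1/2}$-controlled — more precisely, from the classical inequality that for a squarefree integer polynomial the product of $|f'(\alpha_i)|$ over roots equals $|\mathrm{disc}(f)|\ge 1$, combined with an upper bound on each individual $|f'(\alpha_\ell)|\le q^{O(k)}$ coming from $|\alpha_\ell|<q$ and bounded coefficients, to force $|f'(\alpha_i)|=|Q_i(\alpha_i)|$ (in the squarefree case $m_i=1$) to be at least $q^{-O(k^2)}$. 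For the non-squarefree case one replaces $P_w$ by its squarefree part, notes that a multiplicity-$m_i$ root contributes only one factor there and the multiplicities only cost an extra harmless $q^{O(k)}$ per derivative, and the same conclusion holds. Everything else is routine but tedious arithmetic with exponents, which — given the very conservative target bound — leaves a lot of slack.
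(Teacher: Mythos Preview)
Your outline is essentially the same as the paper's: both start from the residue formula
\[
c_{ij}=\frac{1}{(m_i-j)!}\lim_{x\to\alpha_i}\frac{d^{m_i-j}}{dx^{m_i-j}}\Big[(x-\alpha_i)^{m_i}/P_w(x)\Big],
\]
write $Q(x)=P_w(x)/(x-\alpha_i)^{m_i}$, apply Fa\`a di Bruno to $(1/Q)^{(m_i-j)}(\alpha_i)$, bound the numerator factors $|Q^{(l)}(\alpha_i)|$ from above via the coefficient bounds on $P_w$ (the paper does this by the clean identity $|Q^{(l)}(\alpha_i)|\le|P_w^{(l+m_i)}(\alpha_i)|$, rather than by differentiating the product), and then need a quantitative lower bound on $|Q(\alpha_i)|=\prod_{\ell\ne i}|\alpha_i-\alpha_\ell|^{m_\ell}$.

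The only substantive difference is the source of that last bound. The paper invokes Rump's separation theorem, which is stated for integral polynomials \emph{possibly having multiple zeros} and gives $\mathrm{sep}(P_w)>[2k^{k/2+2}(|P_w|_1+1)^k]^{-1}$ directly in terms of $P_w$. You propose Mignotte's bound (equivalently, the discriminant argument $\prod_j|P_w'(\alpha_j)|=|\mathrm{disc}(P_w)|\ge 1$), which in the squarefree case yields the same $|Q(\alpha_i)|\ge q^{-O(k^2)}$ and hence the same final exponents. The weak point is your non-squarefree workaround: passing to the squarefree part $\tilde P$ of $P_w$ and applying Mignotte there costs you control of the coefficients --- by the Landau--Mignotte factor bound, $\tilde P$ can have coefficients up to $2^k$ times those of $P_w$, so $\mathrm{sep}(\tilde P)$ picks up an extra $2^{-k^2}$, and after raising to powers $\le k$ in the Fa\`a di Bruno step you overshoot the stated $2^{k^2+k}$ by roughly $2^{k^4}$. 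This does not matter for the paper's downstream application (any $\exp(\mathrm{poly}(k))$ bound on $|c_{ij}|$ is crushed by the doubly-exponential main term there), but it does not literally recover the stated exponents. Citing a separation bound that allows multiple roots, as the paper does with Rump, fixes this in one line.
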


\begin{proof}
    By using the residue method (described, for instance, in \cite{henrici1988applied}), one gets the following expression for the coefficient $c_{ij}$ in front of $\frac{1}{(x-\alpha_i)^j}$ in the Partial fraction decomposition of $\frac{1}{P_w(x)}$:

    \begin{equation} \label{eq:formula-for-cij}
        c_{ij} = \frac 1 {(m_i-j)!}\lim_{x\to \alpha_i}\frac{d^{m_i-j}}{dx^{m_i-j}} \left[\frac{(x-\alpha_i)^{m_i}}{P_w(x)}\right]
    \end{equation}

    Denote $Q(x):=P_w(x)/(x-\alpha_i)^{m_i}$. We need to estimate the value of $\left(\frac{1}{Q}\right)^{(m_i-j)}(\alpha_i)$, i.e. the ($m_i-j$)-th derivative of $\frac{1}{Q(x)}$ at $\alpha_i$. As the first step, we  will be using the following  Faà di Bruno's formula.

    \begin{lemma*}[Faà di Bruno's formula]
        Given two differentiable functions $f$ and $g$ defined in a neighborhood of $x\in\mathbb{R}$ ($x\in\mathbb{C}$), $t\in\mathbb{N}$ and supposing that all necessary derivatives of $f(\cdot)$ at $g(x)$ and of$g(\cdot)$ at $x$ exist, there is the following expression for the $t$-th derivative of the function $f(g(x))$.
        \begin{equation*}
        \frac{d^t}{dx^t} f(g(x))=\sum \frac{t!}{r_1!\,1!^{r_1}\,r_2!\,2!^{r_2}\,\cdots\,r_t!\,t!^{r_t}}\cdot 
        f^{(r_1+\cdots+r_t)}(g(x))\cdot \prod_{l=1}^t\left(g^{(l)}(x)\right)^{r_l},
        \end{equation*}
where the sum is taken over all $t$-tuples of nonnegative integers $(r_1,\ldots,r_t)$ satisfying the following constraint

\begin{equation*}
    1\cdot r_1+2\cdot r_2+3\cdot r_3+\cdots+t\cdot r_t=t.
\end{equation*}
    \end{lemma*}

Let $t=m_i-j$. Then, by applying Faà di Bruno's formula to $f(x)=\frac{1}{x}$ and $g(x)=Q(x)$, we obtain

\begin{equation} \label{eq:cij_est 1}
    \lim_{x\to \alpha_i}\frac{d^{m_i-j}}{dx^{m_i-j}} \left[\frac{1}{Q(x)}\right] = \sum_{ 1\cdot r_1+2\cdot r_2+3\cdot r_3+\cdots+t\cdot r_t=t} \frac{t!(-1)^{r_1+\cdots+r_t}(r_1+\cdots+r_t)!}{r_1!\,1!^{r_1}\,r_2!\,2!^{r_2}\,\cdots\,r_t!\,t!^{r_t} Q(\alpha_i)^{r_1+\cdots+r_t+1}}\cdot \prod_{l=1}^t\left(Q^{(l)}(\alpha_i)\right)^{r_l},
\end{equation}

since $\left(\frac{1}{x}\right)^{(n)}=\frac{(-1)^n n!}{x^{n+1}}$.

Let us inspect the summation term for a given tuple of $r_1, r_2,\ldots, r_t$. We start getting an upper bound for $Q^{(l)}(\alpha_i)$.

\begin{lemma}
    For any $l\in\mathbb{N}$ the value of $Q^{(l)}(\alpha_i)$ is bounded by
    \begin{equation*}
        |Q^{(l)}(\alpha_i)| \leq qe^2\cdot k!
    \end{equation*}
\end{lemma}
\begin{proof}
    Observe that for any natural $r$
    \begin{equation*}
        \left|P_w^{(r)}(\alpha_i)\right|\leq \sum_{j=r}^{k} q\frac{j!}{(j-r)!}|\alpha_i|^{j-r}\leq q\cdot k!\sum_{j=r}^{k} \frac{|\alpha_i|^{j-r}}{(j-r)!}\leq q\cdot k!\sum_{j=0}^{\infty} \frac{2^{j}}{j!}\leq q\cdot k! e^2.
    \end{equation*}
    Here we used the fact that the coefficients of $P_w(x)$ are located between $-q$ and $q$ and (see Proposition \ref{proposition: bound-abs-value-roots}) the value of $|\alpha_i|$ is bounded by $1.65<2$.

    Notice now  that 

    \begin{equation*}
        \left|Q^{(l)}(\alpha_i)\right| \leq \left|\left[(x-\alpha_i)^{m_i}Q(x)\right]^{(l+m_i)}\big|_{x=\alpha_i}\right| = \left|P_w^{(l+m_i)}(\alpha_i)\right|.
    \end{equation*}
\end{proof}

Therefore,

\begin{equation} \label{eq:cij_est 2}
    \left |\prod_{l=1}^t\left(Q^{(l)}(\alpha_i)\right)^{r_l}\right| \leq (qe^2k!)^{r_1+\ldots r_t}\leq (qe^2k!)^{t}
\end{equation}

 We will estimate now the following term

\begin{equation} \label{eq: factorials term}
    \frac{t!(-1)^{r_1+\cdots+r_t}(r_1+\cdots+r_t)!}{r_1!\,1!^{r_1}\,r_2!\,2!^{r_2}\,\cdots\,r_t!\,t!^{r_t} Q(\alpha_i)^{r_1+\cdots+r_t+1}}
\end{equation}
in the expression (\ref{eq:cij_est 1}). Since $r_1+\dots+r_t < t$, the next estimate on (\ref{eq: factorials term}) 
has the following form
\begin{equation*}
    \left|\frac{t!(-1)^{r_1+\cdots+r_t}(r_1+\cdots+r_t)!}{r_1!\,1!^{r_1}\,r_2!\,2!^{r_2}\,\cdots\,r_t!\,t!^{r_t} Q(\alpha_i)^{r_1+\cdots+r_t+1}}\right| \leq \frac{t!^2}{1\cdot\min\left(\left|Q(\alpha_i)\right|^{t},1\right)}
\end{equation*}

Hence, the absolute value of the summation term is bounded by

\begin{equation} \label{eq:cij_est 3}
    t!^2\left(\frac{qe^2k!}{\min\left(\left|Q(\alpha_i)\right|,1\right)}\right)^t
\end{equation}

In order to obtain a lower bound for $Q(\alpha_i)$, we will use the following result on the minimal distance between different roots of a polynomial (\cite{Minimal-poly-root-distance})

\begin{theorem*}[Rump] \label{th:sep(P)}
    Let $P(x)$ be an arbitrary integral polynomial (perhaps having multiple zeros) of degree $k$. Denote by $s=|P|_1$ the sum of the absolute values of coefficients of $P$,and by $sep(P)$ the minimal distance between two different roots of $P$. Then
\begin{equation*}
    sep(P) > \left[2 k^{k/2+2}(s+1)^k\right]^{-1}
\end{equation*}
\end{theorem*}

In our case, since the coefficients of $P_w(x)$ are bounded by $q$, we get the following lower bound

\begin{equation*}
    sep(P) > \left[2 k^{k/2+2}(q(k+1)+1)^k\right]^{-1} = \left[2 k^{k/2+2}(qk)^k(1+\frac{3}{qk})^k\right]^{-1} \geq \left[2 k^{k/2+2}(qk)^ke^{3/q}\right]^{-1}
\end{equation*}

Hence,

\begin{equation} \label{eq:cij_est 4}
    \left|Q(\alpha_i)\right|=\left|\prod_{\alpha_j\neq\alpha_i}(\alpha_i-\alpha_j)\right|\geq\left(sep(P)\right)^{k-m_i}\geq \min(1, sep(P))^k\geq\left[2 k^{k/2+2}(qk)^ke^{3/q}\right]^{-k}.
\end{equation}

The number of $t$-tuples of nonnegative integers $(r_1,\ldots,r_t)$ satisfying

\begin{equation*}
    1\cdot r_1+2\cdot r_2+3\cdot r_3+\cdots+t\cdot r_t=t
\end{equation*}    
is bounded by the number of $(t+1)$-tuples of nonnegative integers $(r_0, r_1,\ldots,r_t)$ satisfying
\begin{equation*}
    r_0+r_1 +\ldots+ r_t=t, 
\end{equation*}    
which is 
\begin{equation}\label{eq:cij_est 5}
    \binom{(t +1) + (t - 1)}{t}=\binom{2t}{t}\leq (2t)^t.
\end{equation}

Remembering that $t=m_i-j$, and by combining (\ref{eq:formula-for-cij}), (\ref{eq:cij_est 1}), (\ref{eq:cij_est 2}), (\ref{eq:cij_est 3}), (\ref{eq:cij_est 4}) and (\ref{eq:cij_est 5}), we obtain

\begin{equation*}
    |c_{ij}|\leq \frac{1}{t!}\cdot(2t)^t \cdot t!^2 \cdot (q e^2 k!)^t \cdot \left[2 k^{k/2+2}(qk)^ke^{3/q}\right]^{kt}.
\end{equation*}
Since $t=m_i-j< k$, 
\begin{equation*}
    |c_{ij}|\leq (qk)^k \cdot k! \cdot ( e^2 k!)^k \cdot \left[2 k^{k/2+2}(qk)^ke^{3/2}\right]^{k^2}\leq (qk)^k \cdot k^k \cdot (2 e^2 k^k)^k \cdot \left[2q^{k} k^{3k/2+2}e^{3/2}\right]^{k^2}.
\end{equation*}
Then, after simplifying the terms, we arrive at
\begin{equation*}
    |c_{ij}|\leq 2^{k^2+k}\cdot e^{3k^2/2 + 2k}\cdot q^{k^3+k}\cdot k^{3k^3/2+3k^2+2k}.
\end{equation*}
\end{proof}

\subsection{Leading exponent of \texorpdfstring{$h_w(n)$}{hw(n)}}

\subsubsection{Estimate of the root} \label{subsec:Estimate of the root}

The Proposition \ref{proposition: bound-abs-value-roots} implies that the polynomial $P_w(x)$ has one real root $\alpha_m$, which is close to $q$, and all other its  roots are bounded in absolute value by $1.65$. Now we provide a more accurate estimate for the value of $\alpha_m$:

\begin{proposition} \label{proposition: largest-root-value}
    Consider a word $w$ of length $k\geq2$ with an inverse autocorrelation polynomial $A_w(x)$ and the recurrence polynomial $P_w(x)$. Then the difference between $q$ and the maximal root $\alpha_m$ of $P_w(x)$ is equal to
    \begin{equation*}
        q-\alpha_m = \frac{1}{A_w(q)} + \frac{A'_w(q)}{A_w(q)^3} + O(k^2 q^{-3k+5}),
    \end{equation*}
    where the upper bound constant in $O(k^2 q^{-3k+5})$ does not depend on $k$, $q$, and on the autocorrelation of $w$.
\end{proposition}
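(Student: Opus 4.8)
The plan is to turn the defining equation into an exact fixed-point relation for the maximal root and bootstrap it. Since $P_w(\alpha_m)=1+(\alpha_m-q)A_w(\alpha_m)=0$, the maximal root satisfies the identity
\begin{equation*}
    q-\alpha_m=\frac{1}{A_w(\alpha_m)},
\end{equation*}
so with $\varepsilon:=q-\alpha_m$ we have the exact equation $\varepsilon=1/A_w(q-\varepsilon)$. The first ingredients are two elementary estimates. By Proposition~\ref{proposition: bound-abs-value-roots} the root $\alpha_m$ lies in $(q-q^{2-k},q)$ (for $k\ge 9$; the finitely many types with $2\le k\le 8$ are treated separately), so $0<\varepsilon<q^{2-k}$; and because $A_w$ is monic with coefficients in $\{0,1\}$ one has $q^{k-1}\le A_w(q)\le \frac{q^{k}-1}{q-1}\le 2q^{k-1}$, so $A_w(q)\asymp q^{k-1}$. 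Since $A_w$ has nonnegative coefficients it is increasing on $(0,\infty)$, hence $\varepsilon=1/A_w(\alpha_m)\le 1/A_w(q-q^{2-k})\le\bigl(q^{k-1}(1-q^{1-k})^{k-1}\bigr)^{-1}$, and as $(1-q^{1-k})^{k-1}\ge 1-(k-1)q^{1-k}\to 1$ this upgrades to $\varepsilon\le 2q^{1-k}$ for $k$ past an absolute threshold; in particular $\varepsilon=\Theta(q^{1-k})$.

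Next I would Taylor-expand $A_w$ about $q$. Since $A_w$ has degree $k-1$ this is a finite, exact expansion,
\begin{equation*}
    A_w(q-\varepsilon)=A_w(q)-\varepsilon A_w'(q)+S,\qquad S=\sum_{j\ge 2}\frac{(-\varepsilon)^j}{j!}A_w^{(j)}(q),
\end{equation*}
and the tail is uniformly tiny: the crude bound $|A_w^{(j)}(q)|\le\sum_{i=j}^{k-1}\frac{i!}{(i-j)!}q^{i-j}\le 2k^{j}q^{k-1-j}$ together with $\varepsilon\le 2q^{1-k}$ shows that $S$ is a convergent series dominated by its $j=2$ term, so $|S|\le C\,k^{2}q^{k-3}\varepsilon^{2}$ for an absolute constant $C$. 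Plugging this into the exact relation $\varepsilon\,A_w(q-\varepsilon)=1$ and rearranging yields the cleaner identity
\begin{equation*}
    \varepsilon=\frac{1}{A_w(q)}+\frac{\varepsilon^{2}A_w'(q)}{A_w(q)}-\frac{\varepsilon S}{A_w(q)}.
\end{equation*}

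The final step is a single substitution into this identity. From the displayed identity and the crude bounds ($\varepsilon\le 2q^{1-k}$, $A_w(q)\ge q^{k-1}$, $A_w'(q)=O(kq^{k-2})$, $|S|\le Ck^{2}q^{k-3}\varepsilon^{2}$) one first reads off $\varepsilon=A_w(q)^{-1}+O(k\,q^{-2k+1})$. Writing $\varepsilon=A_w(q)^{-1}+\eta$ with $\eta=O(kq^{-2k+1})$ and expanding $\varepsilon^{2}$ in the correction term $\varepsilon^{2}A_w'(q)/A_w(q)$ produces the asserted second term $A_w'(q)/A_w(q)^{3}$ together with the remainders $\tfrac{2A_w'(q)}{A_w(q)^{2}}\eta$ and $\tfrac{A_w'(q)}{A_w(q)}\eta^{2}$; feeding in $A_w(q)\asymp q^{k-1}$ and $A_w'(q)=O(kq^{k-2})$, both of these are $O(k^{2}q^{-3k+1})$, as is the remaining term $\varepsilon S/A_w(q)$, and $q^{-3k+1}\le q^{-3k+5}$ because $q\ge 2$. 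Every inequality invoked — the localisation of $\alpha_m$, the bounds on $A_w(q)$ and its derivatives, and the bound on $S$ — is uniform in $k$, in $q$, and in the autocorrelation of $w$, so the implied constant is absolute, as claimed. The step I expect to demand the most care is precisely this last bookkeeping: one must check that the substitution remainders, the tail contribution $\varepsilon S/A_w(q)$, and any higher-order iterates really do all collapse into a single $O(k^{2}q^{-3k+5})$ term, and the slightly generous exponent $-3k+5$ in the statement is exactly what leaves room to carry this out without fighting over constants.
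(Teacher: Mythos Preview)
Your proof is correct and follows essentially the same approach as the paper: both start from the exact relation $\varepsilon\,A_w(q-\varepsilon)=1$, Taylor-expand $A_w$ about $q$, and extract the first two terms with a uniformly bounded remainder. The only cosmetic difference is that the paper sandwiches $\varepsilon$ between roots of two explicit quadratics and then expands $\sqrt{1-x}$, whereas you iterate the fixed-point identity once; your route is slightly more direct but lands in the same place.
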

\begin{proof}

Consider $\varepsilon = q-\alpha_m$. From the proof of the Proposition \ref{proposition: bound-abs-value-roots} it follows that $0\leq\varepsilon\leq q^{-(k-2)}$. Now,
\begin{equation*}
    P_w(\alpha_m)=P(q-\varepsilon) = 1 + (q-\varepsilon-2)A_w(q-\varepsilon) = 1 - \varepsilon A_w(q-\varepsilon)=0.
\end{equation*}

By using Taylor expansion around $x=q$, one gets

\begin{equation*} 
    1 - \varepsilon \left(A_w(q) - A_w'(q)\varepsilon + R(\varepsilon))\right)=0
\end{equation*}
i.e.
\begin{equation*} \label{eq:taylor-for-Aw}
    \varepsilon \left(A_w(q) - A_w'(q)\varepsilon)\right)= 1 - \varepsilon R(\varepsilon),
\end{equation*}
where 

\begin{equation*}
    |\varepsilon R(\varepsilon)| \leq \varepsilon\frac{A_w'' (q)}{2!} \varepsilon^2 \leq  \varepsilon^3\sum_{i=0}^{k-3} k^2 q^i \leq k^2 q^{k-2} \varepsilon^3\leq k^2 q^{k-2} q^{-3(k-2)}=k^2 q^{-2k+4}.
\end{equation*}

The first inequality above follows from the Lagrange form of the remainder in Taylor expansion and from the fact that $A_w''(x)$ is an increasing function for positive $x$. 

Next, for small $0\leq\varepsilon\leq q^{-(k-2)}$ the function $\varepsilon \left(A_w(q) - A_w'(q) \varepsilon)\right)$ is monotonously increasing since
\begin{equation*}
    \frac{d [\varepsilon \left(A_w(q) - A_w'(q)\varepsilon)\right)]}{d \varepsilon} = A_w(q)-2A'_w(q)\varepsilon > q^{k-1} -2kq^{k-1}\cdot q^{-k+2}= q^{k-1} - 2kq > 0
\end{equation*}
for $k\geq6$ and $q\geq2$.
It means that $\varepsilon$ lies between the solutions of 

\begin{equation*}
    \varepsilon \left(A_w(q) - A_w'(q)\varepsilon)\right)= 1 - k^2 q^{-2k+4}
\end{equation*}
    and the solution of 
\begin{equation*}
    \varepsilon \left(A_w(2) - A_w'(2)\varepsilon)\right)= 1 + k^2 q^{-2k+4},
\end{equation*}
i.e.
\begin{equation*}
    \frac{A_w(q)-\sqrt{A_w(q)^2-4A'_w(q)(1-k^2 q^{-2k+4})}}{2A'_w(q)} \leq \varepsilon \leq \frac{A_w(q)-\sqrt{A_w(q)^2-4A'_w(q)(1+k^2 q^{-2k+4})}}{2A'_w(q)}.
\end{equation*}

Here, we take the solution with the minus sign in front of  the square brackets, since the solution with the plus sign is greater than

\begin{equation*}
    \frac{A_w(q)}{2 \cdot k A_w(q)/q}=\frac{q}{k},
\end{equation*}
and it can not satisfy the conditions $\varepsilon<q^{-(k-2)}$ for $k\geq 2$ and $q\geq 2$.

We will use the following bounds on the values of $\sqrt{1-x}$ for $0<x\leq 1$

\begin{equation*}
    1-\frac{x}{2}-\frac{x^2}{8}-\frac{x^3}{2}<\sqrt{1-x}<1-\frac{x}{2}-\frac{x^2}{8}.
\end{equation*}
Therefore,

\begin{multline*}
    \varepsilon \geq  \frac{A_w(q)-A_w(q)\sqrt{1-4A'_w(q)(1-k^2 q^{-2k+4})/A_w(q)^2}}{2A'_w(q)}\geq \\
    \geq\frac{A_w(q)-A_w(q) +2A'_w(q)(1-k^2 q^{-2k+4})/A_w(q) + \left(4A'_w(q)(1-k^2 q^{-2k+4})\right)^2/[8A_w(q)^3]}{2A'_w(q)}=\\
    =\frac{1-k^2 q^{-2k+4}}{A_w(q)}+\frac{A'_w(q)(1-k^2 q^{-2k+4})^2}{A_w(q)^3} \geq\\
    \geq \frac{1-k^2 q^{-2k+4}}{A_w(q)}+\frac{A'_w(q)(1-2k^2 q^{-2k+4})}{A_w(q)^3}
    \geq \frac{1}{A_w(q)} + \frac{A'_w(q)}{A_w(q)^3} - \left[\frac{k^2 q^{-2k+4}}{A_w(q)}+\frac{A'_w(q)2k^2 q^{-2k+4}}{A_w(q)^3}\right].
\end{multline*}
The above Bernoulli's inequality holds since $k^2 q^{-2k+4}\leq 1$ for $k\geq 4$ and $q\geq2$. Observe now that $A_w(q)\geq q^{k-1}$ and $A_w'(q)\leq k A_w(q)/q$. Thus, the last term in the square brackets is bounded by
\begin{equation*}
    k^2 q^{-2k+4}\cdot q^{-k+1} + k \cdot 2 k^2 q^{-2k+4}\cdot q^{-2k+2 - 1} = k^2 q^{-3k+5} + 2 k^3 q^{-4k+5} = O(k^2 q^{-3k+5}).
\end{equation*}
Here for $k \geq 2$ the exponents with $q$ divided by $q^{-3k+5}$ have non-positive powers. Therefore, if 

\begin{equation*}
  \frac{k^2 q^{-3k+5} + 2 k^3 q^{-4k+6}}{k^2 q^{-3k+5}} < C
\end{equation*}

for some $C>0$, $k\geq 2$ and $q=2$, then the same is true for all $q\geq 2$ and $k\geq2$.  Hence, a bound in $O(k^2 q^{-3k+5})$ on the coefficient does not depend on $q$ and on the autocorrelation of $w$.

Now, by using the other bound for $\sqrt{1-x}$ one gets similarly 

\begin{multline*}
    \varepsilon \leq  \frac{1+k^2 q^{-2k+4}}{A_w(q)}+\frac{A'_w(q)(1+k^2 q^{-2k+4})^2}{A_w(q)^3} + 
    \frac{1}{2}\frac{ \left(4A'_w(q)(1+k^2 q^{-2k+4})\right)^3/A_w(q)^5}{2A'_w(q)}\leq\\
    \leq\frac{1}{A_w(q)} + \frac{A'_w(q)}{A_w(q)^3} + \left[\frac{k^2 q^{-2k+4}}{A_w(q)}+\frac{A'_w(q)\left((1+k^2 q^{-2k+4})^2-1\right)}{A_w(q)^3}+\frac{16A'_w(q)^2(1+k^2 q^{-2k+4})^3}{A_w(q)^5}\right].
\end{multline*}

Again, since $A_w(q)\geq q^{k-1}$,  and $A_w'(q)\leq k A_w(q)/q$, the last term in the square brackets is bounded by

\begin{multline*}
    k^2 q^{-2k+4}\cdot q^{-(k-1)} + k\cdot q^{-2(k-1)-1}\cdot (1+k^2 q^{-2k+4}-1)(1+k^2 q^{-2k+4}+1)  + 16 k^2\cdot q^{-3(k-1)-2} \cdot (1+k^2 q^{-2k+4})^3 =\\
    = k^2 q^{-3k+5} + k^3 q^{-4k+5}(2+k^2 q^{-2k+4}) + 16k^2 q^{-3k+1}(1+k^2 q^{-2k+4})^3 = O(k^2 q^{-3k+5}).
\end{multline*}
As in the previous case, for $k \geq 2$ in the expression above the exponents with $q$ divided by $q^{-3k+5}$ have non-positive power. Hence, if 

\begin{equation*}
  \frac{k^2 q^{-3k+5} + k^3 q^{-4k+5}(2+k^2 q^{-2k+4}) + 16k^2 q^{-3k+1}(1+k^2 q^{-2k+4})^3}{k^2 q^{-3k+5}} < C
\end{equation*}

for some $C>0$, $k\geq 2$ and $q=2$, then the same is true for all $q\geq 2$ and $k\geq2$, i.e. a bound on the coefficient in $O(k^2 q^{-3k+5})$ does not depend neither on $q$ nor the autocorrelation of $w$.
\end{proof}

\subsubsection{Estimate of the value of coefficients}

\begin{proposition} \label{proposition: coefficient-before-alpha_m}
    Consider a word $w$ of length $k$ with an inverse autocorrelation polynomial $A_w(x)$, and the recurrence polynomial $P_w(x)$. Denote by $\alpha_m$ the maximal root of $P_w(x)$, and let $c_w$ be the coefficient in front of$\frac{1}{(x-\alpha_i)}$ in partial fraction decomposition of $\frac{1}{P_w(x)}$. Then
    \begin{equation*}
        \frac{1}{c_w}= A_w(q) - 2(q-\alpha_m)A_w'(q) + O(k^2 q^{-k+2})
    \end{equation*}
    where the upper bound constant in $O(k^2 q^{-k+2})$ does not depend neither on $k$, $q$ nor on the autocorrelation of $w$.
\end{proposition}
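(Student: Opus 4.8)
The plan is to use that, by Proposition~\ref{proposition: bound-abs-value-roots}, the maximal root $\alpha_m$ lies in $(q-q^{-k+2},q)$ while every other root of $P_w(x)$ has absolute value below $1.65<2\le q$; consequently $\alpha_m$ is a \emph{simple} root of $P_w$. For a simple root the residue formula (\ref{eq:formula-for-cij}) reduces to $c_w=\lim_{x\to\alpha_m}\frac{x-\alpha_m}{P_w(x)}=\frac{1}{P_w'(\alpha_m)}$, so it suffices to estimate $P_w'(\alpha_m)$ with the required accuracy.

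I would then differentiate $P_w(x)=1+(x-q)A_w(x)$ to get $P_w'(x)=A_w(x)+(x-q)A_w'(x)$, and substitute $x=\alpha_m=q-\varepsilon$ with $\varepsilon=q-\alpha_m$, obtaining $P_w'(\alpha_m)=A_w(q-\varepsilon)-\varepsilon A_w'(q-\varepsilon)$. Next I would Taylor-expand $A_w$ and $A_w'$ about $x=q$ with Lagrange remainders. Since the coefficients of $A_w$ are in $\{0,1\}$, one has $|A_w''(\xi)|\le\sum_{i=0}^{k-1}i(i-1)q^{i-2}\le 2k^2q^{k-3}$ for $\xi\in[\alpha_m,q]$, and, from Proposition~\ref{proposition: bound-abs-value-roots}, $0\le\varepsilon\le q^{-(k-2)}$. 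Hence $A_w(q-\varepsilon)=A_w(q)-\varepsilon A_w'(q)+O(k^2q^{-k+1})$ and $\varepsilon A_w'(q-\varepsilon)=\varepsilon A_w'(q)+O(k^2q^{-k+1})$, so that $\frac{1}{c_w}=P_w'(\alpha_m)=A_w(q)-2(q-\alpha_m)A_w'(q)+O(k^2q^{-k+1})$, and since $q^{-k+1}\le\frac12 q^{-k+2}$ this is subsumed by the asserted $O(k^2q^{-k+2})$.

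Finally, to see that the implied constant does not depend on $k$, $q$, or the autocorrelation of $w$, I would argue exactly as in the proof of Proposition~\ref{proposition: largest-root-value}: the explicit remainder is a finite sum of terms $(\text{const})\cdot k^a q^{-bk+c}$ with $b\ge1$, and dividing by $k^2q^{-k+2}$ leaves only fixed powers of $k$ times non-positive powers of $q$; such a quotient is maximized at $q=2$ and is bounded there by an absolute constant uniformly in $k$, while nothing in the estimate depends on the actual bits $b_i$, only on the fact that they are $0$ or $1$.

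The computation is essentially routine; the two points deserving care are (i) confirming that $\alpha_m$ is genuinely simple, so that the clean identity $c_w=1/P_w'(\alpha_m)$ is available, and (ii) checking that the cross term $\varepsilon A_w'(q-\varepsilon)$ contributes nothing larger than the main remainder — which holds because $\varepsilon$ already carries the factor $q^{-(k-2)}$, so the dominant error is $\varepsilon^2|A_w''|=O(k^2q^{-k+1})$. I expect the mild bookkeeping of keeping this remainder uniform in $q$ to be the only real obstacle.
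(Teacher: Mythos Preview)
Your proposal is correct and follows essentially the same route as the paper: use Proposition~\ref{proposition: bound-abs-value-roots} to conclude $\alpha_m$ is simple, invoke the residue identity $c_w=1/P_w'(\alpha_m)$, write $P_w'(\alpha_m)=A_w(q-\varepsilon)-\varepsilon A_w'(q-\varepsilon)$, and Taylor-expand at $q$ with a Lagrange remainder controlled by $|A_w''|\le O(k^2 q^{k-3})$ and $\varepsilon\le q^{-(k-2)}$. Your remainder $O(k^2 q^{-k+1})$ is even slightly sharper than the paper's $O(k^2 q^{-k+2})$, and your uniformity argument for the implied constant matches the paper's reasoning.
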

\begin{proof}
It follows from Proposition \ref{proposition: bound-abs-value-roots} that $\alpha_m$ is a single root of $P_w(x)$. One can see from (\ref{eq:formula-for-cij}) that

\begin{equation*}
    c_w=\frac{1}{0!}\lim_{x\to \alpha_m}\left[\frac{(x-\alpha_i)}{P_w(x)}\right] = \frac{1}{P_w'(\alpha_m)}
\end{equation*}
By the definition of a derivative we get
\begin{equation*}
    P_w'(\alpha_m) := \lim_{x\to \alpha_m}\left[\frac{P_w(x) - P_w(\alpha_m)}{(x-\alpha_i)}\right] = \lim_{x\to \alpha_m}\left[\frac{P_w(x)}{(x-\alpha_i)}\right].
\end{equation*}
Let $\varepsilon=q-\alpha_m$. Then, in turn, by using Taylor's theorem with Lagrange form of the remainder, one obtains

\begin{multline*}
    P_w'(\alpha_m)  = \left(1+(x-q)A_w(x)\right)'\Big|_{x=\alpha_m}= A_w(\alpha_m) - (q-\alpha_m) A_w'(\alpha_m) = A_w(q-\varepsilon) - \varepsilon A_w'(q-\varepsilon) = \\
    = A_w(q) - \varepsilon A_w'(q) +\frac{A_w''(q-\nu_1)}{2!}\varepsilon^2 - \varepsilon A_w'(q) +\varepsilon A_w''(q-\nu_2)\varepsilon = A_w(q) - 2\varepsilon A_w'(q) + O( k^2 q^{-k+2})
\end{multline*}
 for some $\nu_1,\nu_2\in [0,\varepsilon]$. Here we used the fact that $A''_w(x)$ is an increasing function of $x$, and hence the remainder in the Taylor expansion for both $\nu_1$ and $\nu_2$ is bounded by 
\begin{equation*}
    A_w''(q-\nu)\varepsilon \leq {A''_w(q)}\varepsilon^2\leq {k^2 q^{k-2}} q^{-2(k-2)} = O(k^2 q^{-k+2}).
\end{equation*}

Observe that the coefficient in $O(k^2 q^{-k+2})$ does not depend on $q$ nor on the autocorrelation of $w$.
\end{proof}

Now we are ready to prove Theorem \ref{th:main}. We start by looking at the moment of time when the graphs of the main exponents of $h_w(n)$ and $h_{w'}$ intersect for two arbitrary words $w$ and $w'$ of the same length:

\subsection{Intersection of the graphs of the dominant exponents of \texorpdfstring{$h_w(n)$}{hw(n)} and \texorpdfstring{$h_{w'}(n)$}{hw'(n)}}
\begin{theorem} \label{th:main-exponents-intersection}
    For $k \geq k_0$ sufficiently large, consider two words $w$ and $w'$ of the same length $k$, where the autocorrelation of $w$ is larger than that of $w'$. Let $\alpha_m$ and $\alpha_{m}'$ denote the largest roots of the recurrence polynomials $P_w(x)$ and $P_{w'}(x)$, respectively, and let $c_w$ and $c_{w'}$ be the coefficients in the partial fraction decompositions of $P_w(x)$ and $P_{w'}(x)$ corresponding to the terms $\frac{1}{x-\alpha_m}$ and $\frac{1}{x-\alpha_{m}'}$, respectively. Then the exponential terms $c_w (\alpha_m)^{\,n}$ and $c_{w'}(\alpha_{m}')^{\,n}$ intersect exactly once at

    \begin{equation*}
        n^* = C_{A_w/A_{w'}} \cdot q A_w(q)\cdot \left(1+O(k q^{-k/3+4})\right) \geq \ln q \cdot q^{k-1}\cdot \left(1+O(k q^{-k/3+4})\right)
    \end{equation*}
    where 
    \begin{equation*}
    C_{A_w/A_{w'}} = \ln\left(1 + \frac{A_{w}(q)-A_{w'}(q)}{A_{w'}(q)}\right)\Big/\left(\frac{A_{w}(q)-A_{w'}(q)}{A_{w'}(q)}\right)
    ~~\text{ and }~~ 
    \frac{\ln q}{q} \leq C_{A_w/A_{w'}}\leq 1.
    \end{equation*}
where $k_0$ does not depend on $q$, and the constant in the bound $O(k q^{-k/3+4})$ does not depend on $k$, $q$, and on the autocorrelations of $w$ and $w'$.
    
\end{theorem}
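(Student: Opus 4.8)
The plan is to solve the equation $c_w(\alpha_m)^n = c_{w'}(\alpha_{m}')^n$ directly and then substitute the asymptotic expansions from Propositions \ref{proposition: largest-root-value} and \ref{proposition: coefficient-before-alpha_m}. Taking logarithms, the unique intersection time is
\begin{equation*}
    n^* = \frac{\ln(c_w/c_{w'})}{\ln(\alpha_{m}'/\alpha_m)} = \frac{\ln c_w - \ln c_{w'}}{\ln \alpha_{m}' - \ln \alpha_m}.
\end{equation*}
So I need sharp asymptotics for both the numerator and the denominator. For the denominator, write $\alpha_m = q - \varepsilon$ and $\alpha_{m}' = q - \varepsilon'$ with $\varepsilon = 1/A_w(q) + O(k q^{-2k+5})$ and $\varepsilon' = 1/A_{w'}(q) + O(k q^{-2k+5})$ by Proposition \ref{proposition: largest-root-value} (the second-order term $A_w'(q)/A_w(q)^3$ is of size $O(k q^{-2k+3})$, hence absorbed into the error if one is willing to lose a factor, but it may need to be retained — see below). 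Then $\ln\alpha_{m}' - \ln\alpha_m = \ln\frac{q-\varepsilon'}{q-\varepsilon} = \ln\left(1 + \frac{\varepsilon - \varepsilon'}{q - \varepsilon}\right)$, and since $\varepsilon - \varepsilon'$ is exponentially small this is $\frac{\varepsilon-\varepsilon'}{q}\left(1 + O(q^{-k})\right) = \frac{1}{q}\left(\frac{1}{A_w(q)} - \frac{1}{A_{w'}(q)}\right)\left(1+O(\cdot)\right)$. Note this is negative, consistent with $\alpha_m > \alpha_{m}'$ (larger autocorrelation gives larger $A_w(q)$, hence smaller $\varepsilon$, hence larger root).

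For the numerator, Proposition \ref{proposition: coefficient-before-alpha_m} gives $1/c_w = A_w(q) - 2\varepsilon A_w'(q) + O(k^2 q^{-k+2})$, and since $\varepsilon A_w'(q) \le k A_w(q)/(q A_w(q)) \cdot 1 = k/q$ is negligible compared to $A_w(q) \ge q^{k-1}$, we get $\ln c_w = -\ln A_w(q) + O(k q^{-k+2})$ after dividing the correction by $A_w(q)$ and using $\ln(1+u) = u + O(u^2)$. Hence $\ln c_w - \ln c_{w'} = \ln A_{w'}(q) - \ln A_w(q) + O(k q^{-k+2}) = -\ln\left(1 + \frac{A_w(q)-A_{w'}(q)}{A_{w'}(q)}\right)\left(1 + O(\cdot)\right)$, which is also negative. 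Dividing numerator by denominator, the two minus signs cancel and we obtain
\begin{equation*}
    n^* = q \cdot \frac{\ln\!\left(1 + \frac{A_w(q)-A_{w'}(q)}{A_{w'}(q)}\right)}{\frac{1}{A_{w'}(q)} - \frac{1}{A_w(q)}}\cdot\left(1 + O(\cdot)\right) = q A_w(q)\cdot \frac{\ln\!\left(1 + \frac{A_w(q)-A_{w'}(q)}{A_{w'}(q)}\right)}{\frac{A_w(q)-A_{w'}(q)}{A_{w'}(q)}}\cdot\left(1 + O(\cdot)\right),
\end{equation*}
using $\frac{1}{A_{w'}(q)} - \frac{1}{A_w(q)} = \frac{A_w(q)-A_{w'}(q)}{A_w(q)A_{w'}(q)} = \frac{1}{A_w(q)}\cdot\frac{A_w(q)-A_{w'}(q)}{A_{w'}(q)}$. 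This is exactly $C_{A_w/A_{w'}}\cdot q A_w(q)$ times a $(1+o(1))$ factor. The bounds $\frac{\ln q}{q} \le C_{A_w/A_{w'}} \le 1$ follow from $\frac{\ln(1+u)}{u} \le 1$ for all $u > 0$ (giving the upper bound), and for the lower bound from the fact that $u = \frac{A_w(q)-A_{w'}(q)}{A_{w'}(q)}$ is at most $q-1$ in the extreme case (the ratio of $A_w(q)$ to $A_{w'}(q)$ is bounded since both are polynomials of degree $k-1$ with leading coefficient $1$ and $0/1$ coefficients, so $A_w(q)/A_{w'}(q) \le \frac{q^k/(q-1)}{q^{k-1}} \le q$), combined with monotonicity of $\frac{\ln(1+u)}{u}$; this also yields the concrete lower bound $n^* \ge \ln q \cdot q^{k-1}\left(1+o(1)\right)$ once one checks $C_{A_w/A_{w'}}\cdot A_w(q) \ge \frac{\ln q}{q}\cdot q^{k-1}$.

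The main obstacle is bookkeeping the error terms so that the relative error comes out as $O(k q^{-k/3+4})$ rather than something larger. The subtlety is that the denominator $\ln\alpha_{m}' - \ln\alpha_m$ is a difference of two exponentially small quantities, so the \emph{relative} error in it is the error in $\varepsilon - \varepsilon'$ divided by $\varepsilon - \varepsilon'$ itself — and $\varepsilon - \varepsilon'$ could in principle be much smaller than $\varepsilon$ or $\varepsilon'$ individually if $A_w(q)$ and $A_{w'}(q)$ are close. I would need a lower bound on $\left|\frac{1}{A_w(q)} - \frac{1}{A_{w'}(q)}\right|$; since $A_w$ and $A_{w'}$ have integer coefficients and $q$ is an integer, $A_w(q) - A_{w'}(q)$ is a nonzero integer, so $\left|A_w(q) - A_{w'}(q)\right| \ge 1$, giving $\left|\frac{1}{A_{w'}(q)} - \frac{1}{A_w(q)}\right| \ge \frac{1}{A_w(q) A_{w'}(q)} \ge q^{-2k}$ roughly. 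This forces the absolute error in $\varepsilon - \varepsilon'$, which is $O(k q^{-2k+5})$ from the $O$-terms in Proposition \ref{proposition: largest-root-value} (and from whether the $A_w'(q)/A_w(q)^3 \sim k q^{-2k+3}$ term must be kept — it must, and one checks it cancels to leading order or is dominated), to be divided by something as small as $q^{-2k}$, yielding a relative error of size $O(k q^{5})$ — which is \emph{not} small. The resolution must be that the second-order terms $\frac{A_w'(q)}{A_w(q)^3}$ in $\varepsilon$ and $\varepsilon'$ largely cancel in the difference, or that one works with the refinement where $A_w(q)/A_{w'}(q)$ bounded away from $1$ is used, or that the $q^{-k/3}$ in the final error is precisely the margin one buys by a more careful three-term expansion; I would carry the expansion of $\alpha_m$ to the order $O(k^2 q^{-3k+5})$ stated in Proposition \ref{proposition: largest-root-value} and track how $\frac{1}{A_{w'}(q)} - \frac{1}{A_w(q)}$, known to be $\ge q^{-2k}$ but generically $\gg q^{-k}$, controls the ratio — the honest version splits into cases by the size of $A_w(q) - A_{w'}(q)$, with the delicate case being when it is $O(1)$, where one must show the third-order error $O(k^2 q^{-3k+5})$ divided by the denominator $\gtrsim q^{-2k}$ gives $O(k^2 q^{-k+5})$, still too weak, so in fact the $A_w'(q)/A_w(q)^3$ terms must be retained and shown to contribute a matching main-order piece. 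I expect this case analysis, and the verification that the retained second-order pieces assemble into the clean closed form, to be the technical heart of the argument.
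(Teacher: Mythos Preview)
Your overall approach matches the paper's exactly: set up $n^* = \dfrac{\ln(c_{w'}/c_w)}{\ln(\alpha_m/\alpha_m')}$, expand numerator and denominator using Propositions~\ref{proposition: largest-root-value} and~\ref{proposition: coefficient-before-alpha_m}, and simplify to $C_{A_w/A_{w'}}\cdot qA_w(q)\cdot(1+o(1))$. Your identification of the main obstacle is also spot on: the relative error in $\varepsilon'-\varepsilon$ is the absolute error divided by $\frac{A_w(q)-A_{w'}(q)}{A_w(q)A_{w'}(q)}$, and in the worst case the latter is only $\gtrsim q^{-2k}$, so the second-order piece $\frac{A_w'(q)}{A_w(q)^3}-\frac{A_{w'}'(q)}{A_{w'}(q)^3}\sim kq^{-2k}$ gives a relative error that is not small a priori.

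Where you go astray is in the resolution. The second-order pieces do \emph{not} ``assemble into the clean closed form'' or ``contribute a matching main-order piece''; the paper shows they are always dominated by the first-order difference, with relative size $O(kq^{-k/3+4})$. The mechanism you are missing is a structural lemma about autocorrelations (not all $0/1$ strings are valid autocorrelations): if $i_m=\max\{i<k: b_i=1\}$, then every $i$ with $b_i=1$ satisfies either $i<k-i_m$ or $i=k-m(k-i_m)$ for some integer $m$. The case split is on $i_{\neq}$, the largest index where the autocorrelations of $w$ and $w'$ differ. If $i_{\neq}\le 2k/3+1$, then $A_w(q)-A_{w'}(q)=O(q^{2k/3+2})$ and $A_w'(q)-A_{w'}'(q)=O(kq^{2k/3+1})$, and a direct computation bounds the second-order difference by $O(kq^{-7k/3+4})$ absolutely; dividing by the worst-case first-order difference $\ge q^{-2k}$ gives $O(kq^{-k/3+4})$. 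If $i_{\neq}>2k/3+1$, the structural lemma forces $A_w(q)-A_{w'}(q)\ge q^{2k/3-1}$ (this is where the autocorrelation constraint is essential --- a generic $0/1$ polynomial would not satisfy this), so the first-order difference is $\ge q^{-4k/3-1}$, large enough to absorb the crude bound $O(kq^{-2k+1})$ on the second-order difference. The threshold $2k/3$ is exactly what produces the exponent $-k/3$ in the final error. An analogous proposition handles the ratio $c_{w'}/c_w$. Without invoking this structural constraint on autocorrelations, your case analysis cannot close.
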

\begin{proof}
We begin by noting that graphs of any two different positive exponential functions intersect exactly once. In order to find the corresponding value of $n$, we start with the following (simple) equation:
\begin{equation*}
    c_w (\alpha_m)^n = c_{w'} (\alpha_m')^n.
\end{equation*}
Hence,
\begin{equation*}
    \frac{c_{w'}}{c_w} =  \left(\frac{\alpha_m}{\alpha_m'}\right)^n,
\end{equation*}
and
\begin{equation*}
    n = \frac{\ln(c_{w'}/c_{w})}{\ln(\alpha_m/\alpha_m')}. 
\end{equation*}
Thus, we need to analyze the ratios $c_{w'}/c_{w}$ and $\alpha_m/\alpha_{m}'$. We consider at first $\alpha_m/\alpha_{m}'$, and then estimate $c_{w'}/c_{w}$. Besides, we always assume that $k \geq 9$.
\newline

\subsubsection{Finding \texorpdfstring{$\mathbf{\alpha_m/\alpha_m'}$}{alpham/alpham'}:} 
\begin{proposition} \label{prop: alpha/alpha'}
   Under the same conditions as in Theorem~\ref{th:main-exponents-intersection}, the ratio
\begin{equation}
    \frac{\alpha_m}{\alpha_m'} 
    = 1 + \frac{A_{w}(q) - A_{w'}(q)}{q\,A_{w'}(q)\,A_{w}(q)}\left(1 + O(k q^{-k/3+4})\right),
\end{equation}
where the  (implicit) constant in $O(k q^{-k/3+4})$ does not depend on $q$, $k$, and on the autocorrelations of $w$ and $w'$.
\end{proposition}
\begin{proof}
Let $\varepsilon=q-\alpha_m$ and $\varepsilon'=q-\alpha_m'$. Then
\begin{equation} \label{eq:alpha-ratio-epsilon}
    \frac{\alpha_m}{\alpha_m'} = \frac{q-\varepsilon}{q-\varepsilon'}= \frac{1-\varepsilon/q}{1-\varepsilon'/q} = (1-\varepsilon/q)[1+\varepsilon'/q + (\varepsilon')^2/q^2+ O((\varepsilon'/q)^3)] = 1 + \frac{\varepsilon'-\varepsilon}{q}\left(1+\varepsilon'/q\right) + O\left(\frac{1}{q^{3(k-1)}}\right).
\end{equation}
Here we expanded $\frac{1}{1-\varepsilon'/q}$ as a geometric series and observed that
\begin{equation*}
    (\varepsilon'/q)^3 + (\varepsilon'/q)^4 +\ldots = \frac{(\varepsilon'/q)^3}{1-\varepsilon'/q}\leq \frac{(\varepsilon'/q)^3}{1-q^{-(k-1)}}\leq\frac{(\varepsilon'/q)^3}{1-2^{-(2-1)}} = 2(\varepsilon'/q)^3 \leq 2\left(q^{-(k-1)}\right)^3
\end{equation*}
since $\varepsilon < q^{-(k-2)}$.

Our next step is to understand how $\varepsilon' - \varepsilon$ is related to the values of $A_w(q)$, $A_w'(q)$, $A_{w'}(q)$, and $A_{w'}'(q)$. Making use of the Proposition~\ref{proposition: largest-root-value}, we obtain
\begin{multline} \label{eq: delta epsilon}
    \varepsilon'-\varepsilon=\frac{1}{A_{w'}(q)}-\frac{1}{A_w(q)} + \frac{A'_{w'}(q)}{A_{w'}(q)^3} - \frac{A'_w(q)}{A_w(q)^3} + O(k^2 q^{-3k+5}) =\\
    =\frac{A_{w}(q)-A_{w'}(q)}{A_{w'}(q)A_{w}(q)} +  \frac{A_{w}(q)^3 A_{w'}'(q)-A_{w'}(q)^3 A_{w}'(q)}{A_{w'}(q)^3A_{w}(q)^3} + O(k^2 q^{-3k+5}).
\end{multline}

We will show now that, due to the structure of the autocorrelations of $w$ and $w'$, the first term always dominates the second one:

\begin{proposition} \label{proposition: delta A_w dominates delta A'_w}
Let $A_w(x)$ and $A_{w'}(x)$ be the autocorrelation polynomials of two words $w$ and $w'$ of length $k$, respectively. Then
\begin{equation} \label{eq:delta A'= o(delta A)}
    \frac{A_{w}(q)^3 A_{w'}'(q)-A_{w'}(q)^3 A_{w}'(q)}{A_{w'}(q)^3A_{w}(q)^3} = O\left(kq^{-k/3 + 4}\frac{A_{w}(q)-A_{w'}(q)}{A_{w'}(q)A_{w}(q)} \right).
\end{equation}
    
\end{proposition}
\begin{proof}
    Let $\{b_i\}_{i=1}^{k}$ and $\{b'_i\}_{i=1}^{k}$ be the autocorrelations of $w$ and $w'$, respectively, and $i_{\neq}$ be the largest $i$, such that $b_i \neq b'_i$. We will consider two cases:
    \begin{enumerate}[(i)]
        \item $i_{\neq} \leq 2k/3 + 1$;
        \item $i_{\neq} > 2k/3 + 1$,
    \end{enumerate}
    which will be analyzed separately.
    \newline
    
    (i) In the first case, $A_{w}(q)-A_{w'}(q) = O(q^{2k/3 + 2})$, and $ A_{w}'(q)-A_{w'}'(q) = O(k q^{2k/3+1})$. Hence,

    \begin{multline*}
        \frac{A_{w}(q)^3 A_{w'}'(q)-A_{w'}(q)^3 A_{w}'(q)}{A_{w'}(q)^3A_{w}(q)^3} = 
        \frac{A_{w}(q)^3 (A_{w}'(q) + O(k q^{2k/3+1}))-[A_{w}(q)+O(q^{2k/3+2})]^3 A_{w}'(q)}{A_{w'}(q)^3A_{w}(q)^3} = \\
        =\frac{A_{w}(q)^3 A_{w}'(q) + A_{w}(q)^3 O(k q^{2k/3+1})-\left[A_{w}(q)^3+3A_{w}(q)^2 O(q^{2k/3+2}) + 3A_{w}(q) O(q^{4k/3+4}) + O(q^{6k/3 + 6})\right] A_{w}'(q)}{A_{w'}(q)^3 A_{w}(q)^3}.
    \end{multline*}
    Observe that $q^{k-1} \leq A_w(q) < q^k$ and $A_w'(q) \leq k A_w(q)/q$, and the same bounds hold for $w'$.  Assume that $k \geq 9$. Then the expression above is equal to
        \begin{multline*}\frac{O(k q^{2k/3+1})}{A_{w'}(q)^3} + \frac{3 O(q^{2k/3+2})A_{w}'(q)}{A_{w'}(q)^3 A_{w}(q)} + \frac{3O(q^{4k/3+4})A_{w}'(q)}{A_{w'}(q)^3 A_{w}(q)^2} + \frac{O(q^{6k/3 + 6})A_{w}'(q)}{A_{w'}(q)^3 A_{w}(q)^3}=\\
        ={O(k q^{2k/3+1}q^{-3(k-1)})} + {O(q^{2k/3+2}kq^{-3(k-1)})} + {O(q^{4k/3+4}kq^{-4(k-1)})}+ {O(q^{6k/3 + 6}kq^{-5(k-1)})}=\\
        = {O(k q^{-7k/3+4})} + {O(kq^{-7k/3+5})} + {O(kq^{-8k/3+8})}+ {O(kq^{-9k/3 + 11})}= O(k q^{-2k-k/3 + 4}).
    \end{multline*}
    Further, since $w$ and $w'$ have different autocorrelations, we get $A_{w}(q) - A_{w'}(q) \geq 1$, and
    \begin{equation*}
        \frac{A_{w}(q)-A_{w'}(q)}{A_{w'}(q)A_{w}(q)} > \frac{1}{q^{2k}}.
    \end{equation*}
    This gives us the bound~(\ref{eq:delta A'= o(delta A)}).
\newline

    (ii) For the second case, we will use the fact that $b_i$ and $b_i'$ are taken from the autocorrelations of $w$ and $w'$. Therefore, the values of $b_i$ cannot be arbitrary (see, for instance, Proposition~4.1 and Corollary~4.1 in \cite{BoldingBunimovich2019}):

    \begin{lemma} \label{lemma: conditions-on-b_i}
        Given a word $w$, let $b_k b_{k-1} \ldots b_1$ be its autocorrelation. Define $i_{m} = \max\{i : b_i = 1, \; 1 \leq i \leq k-1\}$. Then, for any $1 \leq i \leq k-1$ such that $b_i = 1$, either $i < k - i_{m}$ or $i = k - m \cdot (k - i_{m})$ for some integer $m$. In addition, if for some $1 \leq i \leq k-1$ there exists an integer $m$, such that $i = k - m \cdot (k - i_{m})$, then $b_i = 1$.

    \end{lemma}

    Since $i_{\neq} > 2k/3 + 1$, the value of $i_{m}$ for $w$ is also larger than $2k/3 + 1$. 
    Suppose that $i_{m}'$ for $w'$ is less than $i_{m} - 1$. Then
    \begin{multline*}
        A_w(q)-A_{w'}(q) \geq (q^{k-1} + q^{i_{m}-1}) - (q^{k-1} + q^{i_{m}-3}+q^{i_{m}-4}+q^{i_{m}-5}+\ldots + q^1 + q^{0}) \geq\\
        \geq q^{i_{m}-1} - q^{i_{m}-2} \geq q^{i_{m}-2} \geq q^{2k/3-1}.
    \end{multline*}
    Therefore,
    \begin{equation*} \label{eq:lower-bound-on-delta A / (Aw' Aw)}
        \frac{A_{w}(q)-A_{w'}(q)}{A_{w'}(q)A_{w}(q)} \geq \frac{q^{{2k/3-1}}}{q^k q^k} = q^{-2k+2k/3-1}.
    \end{equation*}
    In turn,
    \begin{equation}\label{eq:delta A ' = o(delta A) for large i_max}
        \frac{A_{w}(q)^3 A_{w'}'(q)-A_{w'}(q)^3 A_{w}'(q)}{A_{w'}(q)^3 A_{w}(q)^3} = \frac{ A_{w'}'(q)}{A_{w'}(q)^3 } -\frac{ A_{w}'(q)}{ A_{w}(q)^3}\leq 
        2\cdot k q^{-2(k-1)-1} =  O\left(k q^{-2k/3+2}\frac{A_{w}(q)-A_{w'}(q)}{A_{w'}(q)A_{w}(q)}\right)      
    \end{equation} 
    which provides a stronger estimate than (\ref{eq:delta A'= o(delta A)}).

    Finally, suppose that $i_{m}' = i_{m} - 1$. Then, since $k - i_{m} < k/3 - 1$, we have $k - 2(k - i_{m}) > k - 2k/3 + 2 > k/3 - 1 > k - i_{m}$. Then by Lemma~\ref{lemma: conditions-on-b_i}, the three largest values of $i$ such that $b_i = 1$ are $k$, $i_{m}$, and $k - 2(k - i_{m})$. 

    Similarly, for $w'$ we have $k - i_{m}' < k/3$ and $k - 2(k - i_{m}') > k - 2k/3 > k/3 > k - i_{m}'$. Then by Lemma~\ref{lemma: conditions-on-b_i}, the three largest values of $i$ such that $b_i' = 1$ are $k$, $i_{m} - 1$, and $k - 2(k - i_{m} + 1)$. Hence,
    \begin{multline*}
        A_w(q)-A_{w'}(q) \geq \\
        \geq (q^{k-1} + q^{i_{m}-1} + q^{k- 2(k-i_{m})-1}) - (q^{k-1} + q^{{i_{m}-2}}+ q^{k- 2(k-i_{m})-3}+q^{k- 2(k-i_{m})-4}+\ldots + q^1 + q^{0}) \geq\\
        \geq (q^{i_{m}-1} + q^{k- 2(k-i_{m})-1}) - (q^{{i_{m}-2}}+ q^{k- 2(k-i_{m})-2}) \geq q^{i_{m}-1} - q^{{i_{m}-2}}= q^{{i_{m}-2}}\geq q^{{2k/3-1}}.
    \end{multline*}
    Therefore,
     \begin{equation*}
        \frac{A_{w}(q)-A_{w'}(q)}{A_{w'}(q)A_{w}(q)} \geq \frac{q^{{2k/3-1}}}{q^k q^k} = q^{-2k+2k/3-1}
    \end{equation*}
    and the inequality~(\ref{eq:delta A ' = o(delta A) for large i_max}) holds again, which finishes a proof of the proposition.
\end{proof}
\paragraph{}

By returning to Eq.~(\ref{eq: delta epsilon}), we obtain

\begin{equation*} 
    \varepsilon'-\varepsilon=
    \frac{A_{w}(q)-A_{w'}(q)}{A_{w'}(q)A_{w}(q)}\left(1+O(k q^{-k/3+4})\right)+ O(k^2 q^{-3k+5}) = \frac{A_{w}(q)-A_{w'}(q)}{A_{w'}(q)A_{w}(q)}\left(1+O(k q^{-k/3+4})\right),
\end{equation*}
where the last equality holds due to the following relations
\begin{equation*}
    \frac{A_{w}(q)-A_{w'}(q)}{A_{w'}(q)A_{w}(q)} \geq q^{-2k},
\end{equation*}

and 
\begin{equation*}
    \frac{k^2 q^{-3k+5}}{q^{-2k}\cdot k q^{-k/3+4}}= \frac{k}{q^{2k/3-1}}<1,
\end{equation*}
where $k\geq9$ and $q\geq 2$.

By plugging the expression for $\varepsilon' - \varepsilon$ into (\ref{eq:alpha-ratio-epsilon}), we finally obtain

\begin{multline} \label{eq:alpha-ratio-final}
    \frac{\alpha_m}{\alpha_m'} = 1 + \frac{A_{w}(q)-A_{w'}(q)}{q A_{w'}(q)A_{w}(q)}\left(1+O(k q^{-k/3+4})\right)\left(1+O(q^{-k+1})\right) + O\left(q^{-3(k-1)}\right) = \\
    = 1 + \frac{A_{w}(q)-A_{w'}(q)}{q A_{w'}(q)A_{w}(q)}\left(1+O(k q^{-k/3+4})\right).
\end{multline}
We can perform the last transformation since $q^{-k+1} = O(k q^{-k/3+4})$, $k q^{-k/3+4} = O(1)$ for $k \geq 9$ and $q \geq 2$, and if 
$q^{-3(k-1)} = O\big(q^{-2k} \cdot k q^{-k/3+4}\big)$.

\end{proof}

\subsubsection{Finding \texorpdfstring{$\mathbf{{c_{w'}}/{c_w}}$}{cw'/cw}:} 

\begin{proposition} \label{prop: c_w'/c_w}
    Under the same conditions as in Theorem~\ref{th:main-exponents-intersection}, we have
    \begin{equation}
        \frac{c_{w'}}{c_{w}} = 1 + \frac{A_{w}(q) - A_{w'}(q)}{A_{w'}(q)}\left(1 + O\big(k q^{-k/3+4}\big)\right),
    \end{equation}
    where the implicit constant in $O(k q^{-k/3+4})$ does not depend on $q$, $k$, and on the autocorrelation of $w$ and $w'$.

\end{proposition}
\begin{proof}
Consider now the ratio of the coefficients $c_{w'}/c_w$.  By using Proposition~\ref{proposition: coefficient-before-alpha_m}, we obtain

\begin{multline*} 
    \frac{c_{w'}}{c_w}= \frac{A_w(q) - 2(q-\alpha_m)A_w'(q) + O(k^2 q^{-k+2})} {A_{w'}(q) - 2(q-\alpha_m ')A_{w'}'(q) + O(k^2 q^{-k+2})} = \frac{A_w(q)/A_{w'}(q) - 2\varepsilon A_w'(q)/A_{w'}(q) + O(k^2 q^{-2k+3})} {1 - 2\varepsilon' A_{w'}'(q)/A_{w'}(q) + O(k^2 2^{-2k+3})}.
\end{multline*}
We estimate now $\frac{1}{1 - 2 \varepsilon' A_{w'}'(q)/A_{w'}(q) + O(k^2 q^{-2k+3})}$ by expressing it as a geometric series.

\begin{equation*} 
\frac{1} {1 - 2\varepsilon' A_{w'}'(q)/A_{w'}(q) + O(k^2 2^{-2k+3})} = 
1 + 2\varepsilon' A_{w'}'(q)/A_{w'}(q) + O(k^2 q^{-2k+4}).
\end{equation*}
In more detail, since
\begin{equation*}
    2\varepsilon' A_{w'}'(q)/A_{w'}(q)\leq 2\cdot q^{-k+2}\cdot k = 2k q^{-k+2},
\end{equation*}
and $k^2 q^{-2k+4} = O(k q^{-k+2})$ for $q\geq2$ and $k\geq 9$, then

\begin{equation*}
     2\varepsilon' A_{w'}'(q)/A_{w'}(q) + O(k^2 q^{-2k+4}) \leq C \cdot k q^{-k+2}
\end{equation*}
for some constant $C$. Therefore,
\begin{multline*}
   \left|\sum_{n=2}^{\infty}\left(2\varepsilon' A_{w'}'(q)/A_{w'}(q)+ O(k^2 q^{-2k+3})\right)^n\right|\leq \sum_{n=2}^{\infty} (C k q^{-k+2})^n =\\
   =C^2k^2 q^{-2(k-2)}\cdot \frac{1}{1-C k q^{-k+2}} = C^2k^2 q^{-2(k-2)}\cdot \left(1+ O\left(C k q^{-k+2}\right)\right) = O(k^2q^{-2k+4})
\end{multline*}
for $k \geq k_0$, such that $C k_0 2^{-k_0+2} < 1$. Then the same bound holds for $q > 2$. Hence, we arrive at

\begin{equation} \label{eq:cw/cw'}
    \frac{c_{w'}}{c_w}= 
    \left(A_w(q)/A_{w'}(q) - 2\varepsilon A_w'(q)/A_{w'}(q) + O(k^2 q^{-2k+3})\right)\cdot
    \left(1 + 2\varepsilon' A_{w'}'(q)/A_{w'}(q) + O(k^2 q^{-2k+4})\right).
\end{equation}

Observe now that 

\begin{equation*}
    A_w(q)/A_{w'}(q)\leq q, ~~~ \varepsilon A_w'(q)/A_{w'}(q)\leq q^{-k+2}\cdot k ~~ \text{and} ~~ \varepsilon' A_{w'}'(q)/A_{w'}(q)\leq q^{-k+2}\cdot k,
\end{equation*}
and the last two expressions, as well as $k^2 q^{-2k+3}$, are less than $1$ for $k \geq 9$ and $q \geq 2$.  
Then, after multiplying and combining terms of the same order in (\ref{eq:cw/cw'}), we get

\begin{multline*}
    \frac{c_{w'}}{c_w}=  1 + \frac{A_{w}(q)-A_{w'}(q)}{A_{w'}(q)} - \frac{2\varepsilon A_w'(q)}{A_{w'}(q)} + O(k^2 q^{-2k+3})+ \frac{A_w(q)}{A_{w'}(q)}\frac{2\varepsilon' A_{w'}'(q)}{A_{w'}(q)} +O(k^2 q^{-2(k-2)}) +O(k q^{-2k+3}\cdot kq^{-k+2}) +\\
    +O(q\cdot k^2q^{-2k+4})= 1+ \frac{A_{w}(q)-A_{w'}(q)}{A_{w'}(q)} + 2\frac{\varepsilon' A_{w}(q)A_{w'}'(q) -\varepsilon A_{w'}(q)A_w'(q)}{A_{w'}(q)^2} + O(k^2 q^{-2k+5})
\end{multline*},
since $k^3 q^{-3k+5}<k^2 q^{-2k+5}$ for $q\geq2$ and $k\geq9$.

It follows from Proposition~\ref{proposition: largest-root-value}, that
\begin{equation*}
        \varepsilon = \frac{1}{A_w(q)} + \frac{A'_w(q)}{A_w(q)^3} + O(k^2 q^{-3k+5}) = 
        \frac{1}{A_w(q)} + O(kq^{-1}q^{-2(k-1)}) + O(k^2 q^{-3k+5}) = \frac{1}{A_w(q)}+ O(kq^{-2k+1)})
\end{equation*}
   for $k \geq 9$ and $q \geq 2$. The same estimate holds for $\varepsilon'$. Replacing $\varepsilon$ and $\varepsilon'$ with these estimates brings us to

\begin{multline} \label{eq: formula for cw'/cw with A' term}
    \frac{c_{w'}}{c_w}=  
    1 + \frac{A_{w}(q)-A_{w'}(q)}{A_{w'}(q)} + 2\frac{A_{w}(q)A_{w'}'(q)/A_{w'}(q) -A_{w'}(q)A_w'(q)/A_{w}(q)+O(q^k \cdot kq^{k-1}\cdot k q^{-2k+2})}{A_{w'}(2)^2} +\\
    +O(k^2 q^{-2k+5})
    =1 + \frac{A_{w}(q)-A_{w'}(q)}{A_{w'}(q)} + 2\frac{A_{w}(q)^2 A_{w'}'(q) -A_{w'}(q)^2 A_w'(q)}{A_{w}(q)A_{w'}(q)^3} + O(k^2 q^{-2k+5})
\end{multline},
since
\begin{equation*}
    \frac{q^k \cdot kq^{k-1}\cdot k q^{-2k+2}}{A_{w'}(2)^2} \leq \frac{k^2q}{q^{2(k-1)}}=k^2 q^{-2k+3}.
\end{equation*}
Similarly to Proposition~\ref{proposition: delta A_w dominates delta A'_w}, one can show that
\begin{proposition} \label{proposition: delta A_w dominates delta A'_w for cw/cw'}.
Let $A_w(x)$ and $A_{w'}(x)$ be the autocorrelation polynomials of two words $w$ and $w'$ of the same length $k$. Then
\begin{equation} \label{eq:delta A'= o(delta A) for coefficients}
    \frac{A_{w}(q)^2 A_{w'}'(q) -A_{w'}(q)^2 A_w'(q)}{A_{w}(q)A_{w'}(q)^3}= O\left(kq^{-k/3+4}\frac{A_{w}(q)-A_{w'}(q)}{A_{w'}(q)} \right).
\end{equation}
\end{proposition}
    The proof is almost the same as for the Proposition~\ref{proposition: delta A_w dominates delta A'_w}, with just a slight modification, which we present  here for completeness:
\begin{proof}
    Let $\{b_i\}_{i=1}^{k}$ and $\{b'_i\}_{i=1}^{k}$ be the autocorrelations of $w$ and $w'$, respectively, and $i_{\neq}$ be the largest $i$, such that $b_i \neq b'_i$. We will consider two cases:
    \begin{enumerate}[(i)]
    \item $i_{\neq} \leq 2k/3 + 1$; and
    \item $i_{\neq} > 2k/3 + 1$.
    \end{enumerate}
    ~
    \newline    
    (i) In the first case $A_{w}(q)-A_{w'}(q) = O(q^{2k/3 + 2})$, and $ A_{w}'(q)-A_{w'}'(q) = O(k q^{2k/3+1})$. Hence,

    \begin{multline*}
        \frac{A_{w}(q)^2 A_{w'}'(q)-A_{w'}(q)^2 A_{w}'(q)}{A_{w'}(q)^3A_{w}(q)} = 
        \frac{A_{w}(q)^2 (A_{w}'(q) + O(k q^{2k/3+1}))-[A_{w}(q)+O(q^{2k/3+2})]^2 A_{w}'(q)}{A_{w'}(q)^3A_{w}(q)} = \\
        =\frac{A_{w}(q)^2 A_{w}'(q) + A_{w}(q)^2 O(k q^{2k/3+1})-\left[A_{w}(q)^2+2A_{w}(q) O(q^{2k/3+2})+ O(q^{4k/3 + 4})\right] A_{w}'(q)}{A_{w'}(q)^3 A_{w}(q)}.
    \end{multline*}
    Notice that $q^{k-1} \leq A_w(q) < q^k$ and $A_w'(q) \leq k A_w(q)/q$. The same bounds hold for $w'$. We will assume $k \geq 9$. Then the expression above reads as
        \begin{multline*}\frac{A_w(q)O(k q^{2k/3+1})}{A_{w'}(q)^3} + \frac{2 O(q^{2k/3+2})A_{w}'(q)}{A_{w'}(q)^3} + \frac{O(q^{4k/3 + 4})A_{w}'(q)}{A_{w'}(q)^3 A_{w}(q)}=\\
        ={O(q^{k}\cdot k \cdot q^{2k/3+1}q^{-3(k-1)})} + {O(q^{2k/3+2}\cdot kq^{-2(k-1)})} + {O(q^{4k/3 + 4}\cdot kq^{-3(k-1)})}=\\
        = {O(k q^{-4k/3+4})} + {O(kq^{-4k/3+4})} + {O(kq^{-5k/3 + 7})}= O(k q^{-k-k/3 + 4})
    \end{multline*}
    for $k\geq9$ and $q\geq2$.

    At the same time, since $w$ and $w'$ have different autocorrelations, we have $A_{w}(q) - A_{w'}(q) \geq 1$, and
    \begin{equation*}
        \frac{A_{w}(q)-A_{w'}(q)}{A_{w'}(q)} > \frac{1}{q^{k}},
    \end{equation*}
    which gives the bound (\ref{eq:delta A'= o(delta A) for coefficients}).
\newline

    (ii) For the second case, we will use Lemma~\ref{lemma: conditions-on-b_i} again.

    Since $i_{\neq} > 2k/3 + 1$, then the value of $i_{m}$ (as in Lemma~\ref{lemma: conditions-on-b_i}) for $w$ is also greater than $2k/3 + 1$. 
    
    Suppose that $i_{m}'$ for $w'$ is less than $i_{m} - 1$. Then
    \begin{multline*}
        A_w(q)-A_{w'}(q) \geq (q^{k-1} + q^{i_{m}-1}) - (q^{k-1} + q^{i_{m}-3}+q^{i_{m}-4}+q^{i_{m}-5}+\ldots + q^1 + q^{0}) \geq\\
        \geq q^{i_{m}-1} - q^{i_{m}-2} \geq q^{i_{m}-2} \geq q^{2k/3-1}.
    \end{multline*}
    
    Therefore,
    \begin{equation*} \label{eq:lower-bound-on-delta A / (Aw' Aw) for cw/cw'}
        \frac{A_{w}(q)-A_{w'}(q)}{A_{w'}(q)} \geq \frac{q^{{2k/3-1}}}{q^k} = q^{-k+2k/3-1}.
    \end{equation*}
    In turn,
    \begin{multline}\label{eq:delta A ' = o(delta A) for large i_max for cw/cw'}
        \frac{A_{w}(q)^2 A_{w'}'(q)-A_{w'}(q)^2 A_{w}'(q)}{A_{w'}(q)^3 A_{w}(q)} = \frac{ A_{w}(q)A_{w'}'(q)}{A_{w'}(q)^3 } -\frac{ A_{w}'(q)}{ A_{w'}(q)A_{w}(q)}\leq \\
        \leq
        q^k\cdot k q^{-2(k-1)} + k q^{-1} \cdot q^{-k+1} \leq 2 kq^{-k+2}=  O\left(k q^{-2k/3+3}\frac{A_{w}(q)-A_{w'}(q)}{A_{w'}(q)A_{w}(q)}\right)    
    \end{multline},
    which gives a stronger estimate than (\ref{eq:delta A'= o(delta A)}).

    Assume finally that $i_{m}' = i_{m} - 1$. Then, since $k - i_{m} < k/3 - 1$, we have $k - 2(k - i_{m}) > k - 2k/3 + 2 > k/3 - 1 > k - i_{m}$. By Lemma~\ref{lemma: conditions-on-b_i}, the three largest values of $i$, such that $b_i = 1$, are $k$, $i_{m}$, and $k - 2(k - i_{m})$. 

    Similarly, for $w'$ we have $k - i_{m}' < k/3$ and $k - 2(k - i_{m}') > k - 2k/3 > k/3 > k - i_{m}'$. Then by Lemma~\ref{lemma: conditions-on-b_i}, the three largest values of $i$, such that $b_i' = 1$, are $k$, $i_{m} - 1$, and $k - 2(k - i_{m} + 1)$. Hence,
    \begin{multline*}
        A_w(q)-A_{w'}(q) \geq \\
        \geq (q^{k-1} + q^{i_{m}-1} + q^{k- 2(k-i_{m})-1}) - (q^{k-1} + q^{{i_{m}-2}}+ q^{k- 2(k-i_{m})-3}+q^{k- 2(k-i_{m})-4}+\ldots + q^1 + q^{0}) \geq\\
        \geq (q^{i_{m}-1} + q^{k- 2(k-i_{m})-1}) - (q^{{i_{m}-2}}+ q^{k- 2(k-i_{m})-2}) \geq q^{i_{m}-1} - q^{{i_{m}-2}}= q^{{i_{m}-2}}\geq q^{{2k/3-1}}.
    \end{multline*}
    Therefore,
    \begin{equation*} 
        \frac{A_{w}(q)-A_{w'}(q)}{A_{w'}(q)} \geq \frac{q^{{2k/3-1}}}{q^k} = q^{-k+2k/3-1}.
    \end{equation*}
    and the inequality~(\ref{eq:delta A ' = o(delta A) for large i_max for cw/cw'}) holds again, which finishes a proof of the proposition.

\end{proof}
\paragraph{}

By returning to Eq.~(\ref{eq: formula for cw'/cw with A' term}), and by using this bound, we obtain our final estimate for the ratio of the coefficients
\begin{equation} \label{eq: c_w ratio final}
    \frac{c_{w'}}{c_w}=  
    1 + \frac{A_{w}(q)-A_{w'}(q)}{A_{w'}(q)}\left(1+O\left(kq^{-k/3+4} \right)\right).
\end{equation}
The term $O(k^2 q^{-2k+5})$ is absorbed into $O\left(kq^{-k/3+4}\frac{A_{w}(q)-A_{w'}(q)} {A_{w'}(q)} \right)$ since
\begin{equation*}
    \frac{A_{w}(q)-A_{w'}(q)} {A_{w'}(q)}\geq \frac{1}{q^{k-1}}
\end{equation*},
and

\begin{equation*}
    \frac{k^2q^{-2k+5}}{kq^{-k/3+4}q^{-k+1}}=k q^{-2k/3}<1
\end{equation*}
for $k\geq9$ and $q\geq2$.

\end{proof}

\subsubsection{Finding the intersection time \texorpdfstring{$\mathbf{n^*}$}{n*}:} 

Finally, we turn our attention to finding $n^*$. Recall that $n^* = \frac{\ln(c_{w'}/c_{w})}{\ln(\alpha_m / \alpha_m')}$. Then by using (\ref{eq:alpha-ratio-final}) and (\ref{eq: c_w ratio final}), we obtain the following relation

\begin{equation*}
    n^{*} =\ln\left(
    1 + \frac{A_{w}(q)-A_{w'}(q)}{A_{w'}(q)}\left(1+O\left(kq^{-k/3 +4} \right)\right)\right)\Big/\ln\left(1 + \frac{A_{w}(q)-A_{w'}(q)}{qA_{w'}(q)A_{w}(q)}\left(1+O(k q^{-k/3+4})\right)\right).
\end{equation*}
Consider, at first, the numerator $\ln(c_{w'}/c_{w})$.
\begin{multline} \label{eq:ln(c_w'/c_w)_long}
   \ln(c_{w'}/c_{w})=\ln\left( 1 + \frac{A_{w}(q)-A_{w'}(q)}{A_{w'}(q)}\left(1+O\left(kq^{-k/3+4} \right)\right)\right) = \\
   =\left[\ln\left(1 + \frac{A_{w}(q)-A_{w'}(q)}{A_{w'}(q)}\right)\Big/\left(\frac{A_{w}(q)-A_{w'}(q)}{A_{w'}(q)}\right)\right] \cdot \left(\frac{A_{w}(q)-A_{w'}(q)}{A_{w'}(q)}\right)\cdot\\
   \cdot\left[\ln\left(1 + \frac{A_{w}(q)-A_{w'}(q)}{A_{w'}(q)}\left(1+O\left(kq^{-k/3+4} \right)\right)\right)\Big/\ln\left(1 + \frac{A_{w}(q)-A_{w'}(q)}{A_{w'}(q)}\right)\right] 
\end{multline}

We examine now more closely the ratio of the two logarithms, i.e., the last term in (\ref{eq:ln(c_w'/c_w)_long}). Observe that the value of the term $\left(1 + O\left(k q^{-k/3+4}\right)\right)$ lies between $1/2$ and $2$ for sufficiently large $k$, and that $\frac{A_{w}(q) - A_{w'}(q)}{A_{w'}(q)}$ is positive. Then, since for positive real $x$ and $y$ the following relation holds $|\ln(1 + x) - \ln(1 + y)| < |x - y|$, and we have

\begin{equation*}
    \left|\ln\left(1 + \frac{A_{w}(q)-A_{w'}(q)}{A_{w'}(q)}\left(1+O\left(kq^{-k/3+4} \right)\right)\right) - \ln\left(1 + \frac{A_{w}(q)-A_{w'}(q)}{A_{w'}(q)}\right)\right| \leq \frac{A_{w}(q)-A_{w'}(q)}{A_{w'}(q)}O\left(kq^{-k/3+4} \right)
\end{equation*}
Besides (\ref{eq:ln(c_w'/c_w)}) gets transformed to

\begin{equation*}\label{eq:ln(c_w'/c_w)}
    \ln(c_{w'}/c_{w}) = C_{A_w/A_{w'}} \cdot \left(\frac{A_{w}(q)-A_{w'}(q)}{A_{w'}(q)}\right)\cdot \left(1+O\left(kq^{-k/3+4} \right)\right)
\end{equation*},
where
\begin{equation*}
    C_{A_w/A_{w'}} = \ln\left(1 + \frac{A_{w}(q)-A_{w'}(q)}{A_{w'}(q)}\right)\Big/\left(\frac{A_{w}(q)-A_{w'}(q)}{A_{w'}(q)}\right).
\end{equation*}
It is easy to see that 
\begin{equation*}
    \frac{\ln q}{q} \leq C_{A_w/A_{w'}}\leq 1
\end{equation*},
since $0<({A_{w}(q)-A_{w'}(q)})/{A_{w'}(q)}\leq q$.

Consider now the denominator $\ln(\alpha_m/\alpha_m')$:

\begin{multline*}
    \ln(\alpha_m/\alpha_m') =  \ln\left(1 + \frac{A_{w}(q)-A_{w'}(q)}{qA_{w'}(q)A_{w}(q)}\left(1+O(k q^{-k/3+4})\right)\right) = \\
    = \frac{A_{w}(q)-A_{w'}(q)}{qA_{w'}(q)A_{w}(q)}\left(1+O(k q^{-k/3+4})\right) +
    O\left(\left[\frac{A_{w}(q)-A_{w'}(q)}{qA_{w'}(q)A_{w}(q)}\left(1+O(k q^{-k/3+4})\right)\right]^2\right) = \\
    = \frac{A_{w}(q)-A_{w'}(q)}{qA_{w'}(q)A_{w}(q)}\left(1+O(k q^{-k/3+4})\right) +
    O\left(\left[\frac{A_{w}(q)-A_{w'}(q)}{qA_{w'}(q)A_{w}(q)}\right]^2\right) = \frac{A_{w}(q)-A_{w'}(q)}{qA_{w'}(q)A_{w}(q)}\left(1+O(k q^{-k/3+4})\right)
\end{multline*}

Since 

\begin{equation*}
    \left[\frac{A_{w}(q)-A_{w'}(q)}{qA_{w'}(q)A_{w}(q)}\right]^2 \leq \left[\frac{A_{w}(q)-A_{w'}(q)}{qA_{w'}(q)A_{w}(q)}\right]\cdot \frac{1}{q^k} = o\left(\frac{A_{w}(q)-A_{w'}(q)}{qA_{w'}(q)A_{w}(q)}k q^{-k/3+4}\right),
\end{equation*}

We get finally

\begin{multline*}
    n^{*} = \left [C_{A_w/A_{w'}} \cdot \left(\frac{A_{w}(q)-A_{w'}(q)}{A_{w'}(q)}\right)\cdot \left(1+O\left(kq^{-k/3+4} \right)\right) \right]\Big / \left [ \frac{A_{w}(q)-A_{w'}(q)}{qA_{w'}(q)A_{w}(q)}\left(1+O(k q^{-k/3+4})\right)\right] =\\ =C_{A_w/A_{w'}} \cdot q A_w(q)\cdot \left(1+O(k q^{-k/3+4})\right) \geq C_{A_w/A_{w'}} \cdot q^{k}\cdot \left(1+O(k q^{-k/3+4})\right).
\end{multline*}
 It concludes a proof of Theorem \ref{th:main-exponents-intersection}.
\end{proof}

\subsection{Intersection of \texorpdfstring{$\mathbf{h_w(n)}$}{hw(n)} and \texorpdfstring{$\mathbf{h_{w'}(n)}$}{hw'(n)}}

The Theorem~\ref{th:main-exponents-intersection} shows that the graphs of the main exponents of $h_w(n)$ and $h_{w'}(n)$ intersect at a time $n$, which grows with $k$ exponentially. 

It remains to show that the other exponents, which contribute to $h_w(n)$ and $h_{w'}(n)$, do not alter the fact that the intersection time of these first-hitting probability curves still depends exponentially on $k$.

By returning to (\ref{eq: combinatoric expression of h_w(n)}), one can break the expression for $h_w(n)$ into two parts, where one corresponds to the main exponent, and the other one to all remaining exponents:

\begin{equation} \label{eq: expression for h_w(n)}
    h_w(n) = c_w (\alpha_m)^{n-1} + \sum_{\alpha_i\neq \alpha_m}\sum_{j=1}^{m_i} c_{ij}\binom{n-1}{j-1}(\alpha_i)^{n-j}
\end{equation}

It follows from Proposition~\ref{proposition: bound on coefs other exponents}, that for sufficiently large $k$

\begin{equation*}
\ln|c_{ij}| \leq (k^2+k) \ln 2 + \frac{3k^2}{2} + 2k + (k^3+k)\ln q + \left(\frac{3k^3}{2}+3k^2+2k\right) \ln k
< 2 k^3 (\ln q + \ln k)
\end{equation*}
Observe that the lower bound on such $k$ does not depend on $q$, since it is enough that the following relation holds

\begin{equation*}
 k^3 \ln(k)/2 > (k^2+k) \ln 2 + 3k^2/2 + 2k  + {(3k^2+2k)} \ln k.
\end{equation*}

By combining a bound on $|c_{ij}|$ with Proposition~\ref{proposition: bound-abs-value-roots}, we obtain the following bound on the second term in (\ref{eq: expression for h_w(n)}):

\begin{equation*}
   \sum_{\alpha_i\neq \alpha_m}\sum_{j=1}^{m_i} c_{ij}\binom{n-1}{j-1}(\alpha_i)^{n-j} 
   \leq \sum_{\alpha_i\neq \alpha_m}\sum_{j=1}^{m_i} k^{2k^3}q^{2k^3}\cdot n^k \cdot 1.65^{n-1} 
    \leq k\cdot k^{2k^3}q^{2k^3} n^k 1.65^{n-1}=k^{2k^3+1}q^{2k^3} n^k 1.65^{n-1}.
\end{equation*}

Next, we will show that

\begin{equation*}
    h_{w'}(N_{-}) - h_w(N_{-}) > c_{w'} (\alpha_m')^{N_{-}-1} - c_w (\alpha_m)^{N_{-}-1} - 2\cdot k^{2k^3+1}q^{2k^3} n^k 1.65^{n-1} > 0
\end{equation*}
 for an appropriate choice of $N_{-}$. Then it follows, by Theorem 2.1 in \cite{BoldingBunimovich2019}, since there is only one intersection of $h_w(n)$ with $h_{w'}(n)$, and the inequality $h_{w'}(n) < h_w(n)$ holds afterwards, that the intersection occurs for $n > N_{-}$.

Consider a given $0 < \delta < 0.25$, and let $N_{-} = (1 - 2\delta) \, C_{A_w/A_{w'}} \, q A_w(q)$, with $C_{A_w/A_{w'}}$ as in Theorem~\ref{th:main-exponents-intersection}, where $n^*$ is the intersection point of the graphs of the main exponents from that theorem. Then, for sufficiently large $k$, we have $n^* - N_{-} > \delta n^* + 1$, and  particularly,

\begin{equation*}
    1\cdot q\cdot q^{k+1}>N_{-} > (1-2\delta) \cdot \ln q \cdot q^{k-1} \geq 0.5 \cdot 0.5 \cdot 2^{k-1} = 2^{-2} q^{k-1}.
\end{equation*}

Therefore,
\begin{multline} \label{eq:delta cw alpham}
    c_{w'}(\alpha_m')^{N_{-}-1} - c_w (\alpha_m)^{N_{-}-1} = c_w (\alpha_m')^{N_{-}-1} \left(\frac{c_{w'}}{c_{w}} - \left(\frac{\alpha_m}{\alpha_m'}\right)^{N_{-}-1} \right) \geq
    c_w (\alpha_m')^{N_{-}-1} \left(\frac{c_{w'}}{c_{w}}  - \left(\frac{\alpha_m}{\alpha_m'}\right)^{(1-\delta)n^*}\right) = \\
    = c_w (\alpha_m')^{N_{-}-1} \left( \frac{c_{w'}}{c_{w}} - \left(\frac{c_{w'}}{c_{w}}\right)^{1-\delta} \right) = c_w (\alpha_m')^{N_{-}-1}\left(\frac{c_{w'}}{c_{w}}\right)^{1-\delta} \left(\left(\frac{c_{w'}}{c_{w}}\right)^{\delta} - 1  \right).
\end{multline}

In the proof of Theorem \ref{th:main-exponents-intersection} we obtained that

\begin{equation*}
    \frac{c_{w'}}{c_w}=  
    1 + \frac{A_{w}(q)-A_{w'}(q)}{A_{w'}(q)}\left(1+O\left(kq^{-k/3+4} \right)\right) \geq 1+\frac{1}{q^{k}}\left(1+O\left(kq^{-k/3+4} \right)\right)\geq 1+q^{-k-1}
\end{equation*}
for large enough $k$. Hence, for a fixed $\delta$, we have
\begin{equation*}
    \left(\frac{c_{w'}}{c_w}\right)^{\delta}\geq   
    1 + \delta q^{-(k+1)}+ O\left(q^{-2k+4}\right)\geq 1 + \delta q^{-(k+2)}
\end{equation*}
where $k$ is large enough. Also, for sufficiently large $k$ we have  $c_{w'}/c_w > 1$. Then $(c_{w'}/c_w)^{1-\delta} > 1$. Additionally, it follows from Proposition~\ref{proposition: coefficient-before-alpha_m}, that

\begin{equation*}
    {c_w}= \frac{1}{A_w(q) - 2(q-\alpha_m)A_w'(q) + O(k^2 q^{-k+2})} \geq \frac{1}{q^{k}}
\end{equation*}

if $k$. is large enough By substituting the obtained bounds into (\ref{eq:delta cw alpham}), we arrive at

\begin{equation*}
    c_{w'}(\alpha_m')^{N_{-}-1} - c_w (\alpha_m)^{N_{-}-1} \geq q^{-{k}}\cdot (\alpha_m')^{N_{-}-1}\cdot 1\cdot \delta q^{-(k+2)}.
\end{equation*}

Finally, combining the above estimates with the fact that 
\begin{equation*}
    \alpha_m' \geq q - q^{-k+2} \geq 0.99 \, q \geq 1.98 \geq 1.65 \cdot 1.2
\end{equation*}
for $k \geq 0$ and $q \geq 2$, we obtain, for sufficiently large $k$, the following relation

\begin{multline*}
    h_{w'}(N_{-}) - h_w(N_{-}) > q^{-k}\cdot (\alpha_m')^{N_{-}-1}\cdot \delta q^{-(k+2)} - 2k^{2k^3+1}q^{2k^3} (N_{-})^k  1.65 ^{N_{-}-1} =\\ 
    =1.65 ^{N_{-}-1}
    \left(\left(\frac{\alpha_m'}{1.65}\right)^{N_{-}-1}\cdot \delta q^{-(k+2)-k} 
    - \exp\left(\ln2+ ({2k^3+1})\ln k+({2k^3})\ln q +k \ln N_{-}\right)\right) \geq\\
    \geq 1.65 ^{N_{-}-1}\left(\exp\left(0.25 q^{k-1} \ln 1.2 + \ln \delta + ({-(k+2)-k})\ln q \right) \right. -\\
    \left. \exp\left(\ln2+ ({2k^3+1})\ln k+({2k^3})\ln q +k (k+2)\ln q\right)\right) > 0
\end{multline*}.

Analogously, for the upper bound on the intersection time,
\begin{equation*}
    N_{+} := (1 + 2\delta) \, C_{A_w/A_{w'}} \, 2 A_w(2) > N_{-} \geq 2^{-2} q^{k-1},
\end{equation*} 
we have $N_{+} > (1 + \delta) n^* + 1$, and

\begin{multline*}
   c_w (\alpha_m)^{N_{+}-1} - c_{w'}(\alpha_m')^{N_{+}-1}  = c_w (\alpha_m')^{N_{+}-1} \left(\left(\frac{\alpha_m}{\alpha_m'}\right)^{N_{+}-1} - \frac{c_{w'}}{c_{w}} \right) \geq\\
   \geq
    c_w (\alpha_m')^{N_{+}-1} \left(\left(\frac{\alpha_m}{\alpha_m'}\right)^{(1+\delta)n^*} - \frac{c_{w'}}{c_{w}}\right) = 
     c_w (\alpha_m')^{N_{+}-1} \left(\left(\frac{c_{w'}}{c_{w}}\right)^{1+\delta} - \frac{c_{w'}}{c_{w}}  \right) =\\
     =c_w (\alpha_m')^{N_{+}-1}\left(\frac{c_{w'}}{c_{w}}\right)\left(\left(\frac{c_{w'}}{c_{w}}\right)^{\delta} - 1  \right)\geq 
    c_w (\alpha_m')^{N_{+}-1}\left(\left(\frac{c_{w'}}{c_{w}}\right)^{\delta} - 1  \right)
\end{multline*}
for sufficiently large $k$. Then, exactly as in the previous case, one obtains

\begin{multline*}
    h_{w}(N_{+}) - h_{w'}(N_{+}) > q^{-k}\cdot (\alpha_m')^{N_{+}-1}\cdot \delta q^{-(k+2)} - 2k^{2k^3+1}q^{2k^3} (N_{+})^k  1.65 ^{N_{+}-1} =\\ 
    =1.65 ^{N_{+}-1}
    \left(\left(\frac{\alpha_m'}{1.65}\right)^{N_{+}-1}\cdot \delta q^{-(k+2)-k} 
    - \exp\left(\ln2+ ({2k^3+1})\ln k+({2k^3})\ln q +k \ln N_{+}\right)\right) \geq\\
    \geq 1.65 ^{N_{+}-1}\left(\exp\left(0.25 q^{k-1} \ln 1.2 + \ln \delta + ({-(k+2)-k})\ln q \right) \right. -\\
    \left. \exp\left(\ln2+ ({2k^3+1})\ln k+({2k^3})\ln q +k (k+2)\ln q\right)\right) > 0
\end{multline*}
for sufficiently large $k$. 
Finally, since the curves $h_w(n)$ and $h_{w'}(n)$ intersect just once, and $h_w(n) < h_{w'}(n)$ holds before that, the intersection must occur for $n < N_{+}$, which concludes a proof of Theorem~\ref{th:main}.

\bibliographystyle{alpha}
\bibliography{references}

@Article{Odlyzko1993,
author={Odlyzko, A. M.
and Poonen, B.},
title={ZEROS OF POLYNOMIALS WITH 0, 1 COEFFICIENTS},
journal={L'Enseignement Math{\'e}matique},
year={1993},
publisher={Fondation L'Enseignement Math{\'e}matique},
volume={39},
number={3-4},
pages={317},
issn={0013-8584},
doi={10.5169/seals-60430},
url={https://doi.org/10.5169/seals-60430}
}

@book{Flajolet_Sedgewick_2009, place={Cambridge}, title={Analytic Combinatorics}, publisher={Cambridge University Press}, author={Flajolet, Philippe and Sedgewick, Robert}, year={2009}}

@book{graham94,
  added-at = {2016-02-10T11:16:59.000+0100},
  address = {Reading, MA},
  author = {Graham, Ronald L. and Knuth, Donald Ervin and Patashnik, Oren},
  biburl = {https://www.bibsonomy.org/bibtex/280bfeffb7f5803aa479446803adee6be/ytyoun},
  edition = {Second},
  interhash = {a3205fb045d149a551c068e29b289235},
  intrahash = {80bfeffb7f5803aa479446803adee6be},
  isbn = {0201558025 9780201558029 0201580438 9780201580433 0201142368 9780201142365},
  keywords = {discrete.math textbook},
  publisher = {Addison-Wesley},
  refid = {29357079},
  timestamp = {2016-02-10T11:17:30.000+0100},
  title = {Concrete Mathematics: A Foundation for Computer Science},
  year = 1994
}

@MISC {math.stackexchange,
    TITLE = {Real part of roots of monic polynomial with coefficients in [0,1] is less than golden ratio},
    AUTHOR = {John WK (https://math.stackexchange.com/users/551440/john-wk)},
    HOWPUBLISHED = {Mathematics Stack Exchange},
    NOTE = {URL:https://math.stackexchange.com/q/4423805 (version: 2022-04-12)},
    EPRINT = {https://math.stackexchange.com/q/4423805},
    URL = {https://math.stackexchange.com/q/4423805}
}

@article{Minimal-poly-root-distance,
 ISSN = {00255718, 10886842},
 URL = {http://www.jstor.org/stable/2006045},
 abstract = {The minimum root separation of an arbitrary polynomial $P$ is defined as the minimum of the distances between distinct (real or complex) roots of $P$. Some asymptotically good lower bounds for the root separation of $P$ are given, where $P$ may have multiple zeros. There are applications in the analysis of complexity of algorithms and in the theory of algebraic and transcendental numbers.},
 author = {Siegfried M. Rump},
 journal = {Mathematics of Computation},
 number = {145},
 pages = {327--336},
 publisher = {American Mathematical Society},
 title = {Polynomial Minimum Root Separation},
 urldate = {2025-09-06},
 volume = {33},
 year = {1979}
}

@inbook{henrici1988applied,
  title={Applied and Computational Complex Analysis, Volume 1: Power Series Integration Conformal Mapping Location of Zero},
  author={Henrici, P.},
  isbn={9780471608417},
  lccn={73019723},
  series={Wiley Classics Library},
  url={https://books.google.com/books?id=ZrJ6GX5a2WMC},
  year={1988},
  publisher={Wiley},
  chapter={7}
}

@article{BoldingBunimovich2019,
author = {Bolding, M and Bunimovich, L},
year = {2019},
month = {04},
pages = {},
title = {Where and When Orbits of Chaotic Systems Prefer to Go},
volume = {32},
journal = {Nonlinearity},
doi = {10.1088/1361-6544/ab0c34}
}

@article{BruinDemersTodd2018,
  author  = {Bruin, Henk and Demers, Mark and Todd, Mike},
  title   = {Hitting and Escaping Statistics: Mixing, Targets and Holes},
  journal = {Advances in Mathematics},
  volume  = {328},
  pages   = {1263--1298},
  year    = {2018}
}

@incollection{Bunimovich2012,
  author       = {Bunimovich, L.\,A.},
  title        = {Fair dice-like hyperbolic systems},
  booktitle    = {Dynamical Systems and Group Actions},
  series       = {Contemporary Mathematics},
  volume       = {567},
  pages        = {78--89},
  year         = {2012},
  publisher    = {American Mathematical Society}
}
\end{document}